\documentclass[11pt, a4paper]{article}
\usepackage[utf8]{inputenc}
\usepackage[left=2cm,right=2cm,top=3cm,bottom=3cm, a4paper]{geometry}
\usepackage{amsmath, amssymb, amsthm, enumerate}
\usepackage[T2A]{fontenc}
\usepackage{tikz}
\usepackage{comment}
\usepackage{pgfplots}
\usepackage{graphicx}
\usepackage{xcolor}
\usepackage{tabularx}
\usepackage[english]{babel}
\usepackage{algorithm}
\usepackage{algpseudocode}
\theoremstyle{plain}
\newtheorem{Th}{Theorem}[section]

\newtheorem{Prop}[Th]{Proposition}
\newtheorem{Cor}[Th]{Corollary}
\usepackage{listings}
\theoremstyle{definition}
\newtheorem{Def}{Definition}[section]

\newtheorem{Ex}{Example}[section]

\newcommand{\cS}{{\mathfrak S}}
\newcommand{\cX}{{\mathfrak X}}

\newcommand{\cF}{{\mathfrak F}}

\newcommand{\norm}[1]{\left\lVert#1\right\rVert}

\newtheorem{rem}{Remark}[section]

\newcommand{\cL}{{\cal L}}

\newcommand{\bgeqn}{\begin{eqnarray}}
\newcommand{\edeqn}{\end{eqnarray}}
\newcommand{\bgeq}{\begin{eqnarray*}}
\newcommand{\edeq}{\end{eqnarray*}}
\newcommand{\bec}{\begin{center}}
\newcommand{\enc}{\end{center}}

\newcommand{\var}{{\rm Var}}

\newcommand{\half}{ \mbox{\small$\frac{1}{2}$}}

\newcommand{\be}{\begin{equation}}
\newcommand{\ee}{\end{equation}}

\def\ess {{\rm ess\, sup}}
\def\esi {{\rm ess\, inf}}

\def\cvar{{\sf CVaR}}

\def\bbr{{\Bbb{R}}} %real numbers
\def\bbe{{\Bbb{E}}} %expectation

\def\bbp{{\Bbb{P}}}

\def\ebr{\overline{\Bbb{R}}}
\usepackage [square,sectionbib]{natbib}

\def\text#1{\;\,\hbox{#1}\;\,}    

\def\lset{\big\{\,}    \def\mset{\,\big|\,}   \def\rset{\,\big\}}
       
\outer\def\proclaim #1. #2
   \par{\medbreak\noindent{\bf#1.\enspace}{\sl#2}\par
  \ifdim\lastskip<\medskipamount \removelastskip\penalty55\medskip\fi}

\def\eop{\hfill{$\vcenter{\hrule height1pt \hbox{\vrule width1pt height5pt
   \kern5pt \vrule width1pt} \hrule height1pt}$} \medskip}
\def\low#1{{\lower1pt \hbox{$\scriptstyle #1$}}}
\def\high#1{{\raise1pt \hbox{$\scriptstyle #1$}}}

\def\argmin{\mathop{\rm argmin}}

\def\half{{{}\raise 1pt \hbox{$\frac{\scriptstyle 1}{\scriptstyle 2}$}}}
\def\eqalign#1{\begin{array}{lcr} #1 \end{array}}
\def\reals{{I\kern-.35em R}} \def\mdot{{\kern-.02em\cdot\kern-.04em}}
\def\var{{\rm VaR}} \def\cvar{{\rm CVaR}} 
  
 \def\newpage{\vfill\eject}

  \def\cC{{\cal C}} \def\cD{{\cal D}} 
\def\cE{{\cal E}} \def\cF{{\cal F}}   
  \def\cL{{\cal L}}  
  \def\cR{{\cal R}} \def\cS{{\cal S}} 
\def\cV{{\cal V}}  \def\cX{{\cal X}} 

\def\cvar{{\rm CVaR}}

\def\<x>{\langle\!\langle\mathbf{x}\rangle\!\rangle}
\def\l<{\langle\!\langle}
\def\r>{\rangle\!\rangle}

\usepackage{hyperref} % remove borders around links in pdf
\usepackage{xurl}
\hypersetup{
    colorlinks=true,
    linkcolor=blue,
    filecolor=magenta,      
    urlcolor=cyan,
    pdftitle={Overleaf Example},
    pdfpagemode=FullScreen,
    }
\usepackage{booktabs}
\pgfplotsset{compat=1.17}

\usepackage{subfigure}
\usepackage{wrapfig}
\usepackage{float}
\usepackage{multirow}
\usepackage{multicol}
\title{Biased Mean Quadrangle and Applications}
\author{
Anton Malandii\thanks{Department of Applied Mathematics and Statistics, Stony Brook University, Stony Brook, NY 11794, USA; Division of Applied Mathematics, Brown University, Providence, RI 02912, USA;\\
emails: \url{anton.malandii@stonybrook.edu}, \url{anton_malandii@brown.edu}}
\and
Stan Uryasev\thanks{Department of Applied Mathematics and Statistics, Stony Brook University, Stony Brook, NY 11794, USA;\\
email: \url{stanislav.uryasev@stonybrook.edu}}
}
\begin{document}

\maketitle
\begin{abstract}
\noindent

This paper introduces \emph{biased mean regression}, estimating the \emph{biased mean}, i.e.,  $\bbe[Y] + x $, where $ x \in \bbr$. The approach addresses a fundamental statistical problem that covers numerous applications. For instance, estimation of factors driving portfolio loss exceeding expected loss by a specified amount (e.g., $x=$ \$10 billion) or estimation of factors impacting a specific excess release of radiation in the environment (nuclear safety regulations specify different severity levels). 

The estimation is done by minimizing the so-called \emph{superexpectation error}. We establish two equivalence results that connect the method to popular paradigms: (i) biased mean regression is equivalent to quantile regression for an appropriate parameterization and is equivalent to the ordinary least squares for $x = 0$; (ii) in portfolio optimization, minimizing \emph{superexpectation risk}, associated with the superexpectation error, is equivalent to CVaR optimization. The approach is computationally attractive, as minimizing the superexpectation error is reduced to linear programming (LP), offering algorithmic and modeling advantages. It is 
 a good alternative to the ordinary least squares (OLS) regression.

The approach is based on the \emph{Risk Quadrangle} (RQ) framework, which links four stochastic functionals—error, regret, risk, and deviation—through a statistic. For the biased mean quadrangle, the statistic is the biased mean. We study properties of the new quadrangle, such as \emph{subregularity}, and establish its relationship to the quantile quadrangle. Numerical experiments confirm theoretical statements and illustrate the practical implications.
\end{abstract}
\section{Introduction}
\paragraph{Motivation.} 
One of the fundamental goals in statistical modeling is to identify covariates driving significant
excesses from baseline behavior---that is, to explain overperformance or underperformance relative to
$\mathbb{E}[Y]$ by a user-chosen margin $x$ in the units of $Y$.
Let $Y$ denote the response (regressant) and $\mathbf{X}=(X_1,\ldots,X_d)$ the covariates (regressors, factors). The question we target is:
$$
\textit{Which factors in }\mathbf{X}\textit{ explain why }Y\textit{ exceeds (or falls short of) its mean } \mathbb{E}[Y]\textit{ by a fixed margin }x>0?
$$
This problem recurs across various domains. Here are some examples.
\begin{itemize}
\item \emph{Finance:} with $Y$ a portfolio loss and its driving factors $\mathbf{X}$, which factors are responsible for losses exceeding the expected loss by $x = \$ 10$ billion? 
\item \emph{Material design:} Let $Y$ denote a target performance metric for a candidate material (e.g., the
ultimate tensile strength (UTS) measured in $\mathrm{MPa}$) and let $\mathbf{X}$
collect design and process features such as composition (wt.\%), heat-treatment
temperature ($^\circ\mathrm{C}$), holding time ($\mathrm{h}$), cooling rate ($^\circ\mathrm{C}/\mathrm{s}$),
and microstructure descriptors (grain size $\mu\mathrm{m}$, porosity vol.\%). Which settings of $\mathbf{X}$ are associated with producing parts that underperform the baseline
UTS and $x = 1 \,\mathrm{MPa}$ below the population mean? \item \emph{Medicine:} with $Y$ a recovery metric and $\mathbf{X}$ treatment features, which factors are associated with patients beating the average recovery by a clinically meaningful margin? \item \emph{Supply chain and operations:} with $Y$ delivery time and $\mathbf{X}$ capturing supplier reliability and transportation costs, which factors explain deliveries better than average by $x = 2$ hours?
\end{itemize}

A standard tool for biased estimates is \emph{quantile regression} \citep{KoenkerBassett}, which models conditional quantiles indexed by a confidence level $\alpha\in(0,1)$. While powerful, quantile regression does not specify how far above (or below) the mean the estimate should be (by design, the quantile regression estimates quantiles).
By contrast, the \emph{biased mean regression} (developed in this paper) specifies the margin of practical interest $x$ directly in the units of $Y$. The bias is specified relative to the expectation $\mathbb{E}[Y]$. This yields estimates that are easy to understand (``factors leading to under/overperformance the average by exactly $x$ units'') and easy to align with policy or design constraints.

Importantly, the biased-mean and quantile regression are not in conflict. We show that for each bias $x$ there exists a corresponding quantile level $\alpha$ that produces the same estimate, and conversely (under mild conditions). Thus, the biased mean approach can be seen as a \emph{reparameterization} of quantile regression: the user selects a margin $x$ in target units rather than a probability level $\alpha$. This reparameterization offers an alternative modeling perspective, and through the RQ integrates with risk and deviation measures used in portfolio optimization, management, and control.

\paragraph{Risk Quadrangle Basics.} 
Risk management, statistical estimation, and stochastic programming are often treated as distinct disciplines; yet, many of their core concepts --- random losses, objective functionals, and constraints --- share the same mathematical structure and properties. The \emph{Risk Quadrangle} (RQ)  \citep{Quadrangle} provides a unified framework for combining four stochastic functionals--error, regret, risk, and deviation linked by a functional, called statistic. 
Further are some important theoretical and practical implications of the RQ framework. 
\begin{enumerate}
    \item Construction of efficient numerical algorithms for optimization of elements of the quadrangles, in particular risk measures \citep{CVaR, RiskTuning,rockafellar2014random,rockafellar2015residual_risk,Anton2024expectile, cvaropt}. 

    \item Various equivalence statements related to optimization of elements of quadrangles -- for example, the equivalence between expectile and CVaR portfolio optimization \citep{Anton2024expectile}, between $\nu$-SVR and $\varepsilon$-SVR \citep{Anton2022SVR}, and between CVaR regressions based on CVaR2 and Rockafellar errors \citep{golodnikov2019cvar}.  For a detailed review on risk-adaptive approaches in stochastic programming, see \citep{Royset2022}.
    
    \item The relation between the error and deviation functionals (see Definition \ref{risk quadrangle}) implies the error-shaping decomposition of regression \citep{RiskTuning}. This decomposition establishes the equivalence between error minimization and deviation minimization with a constraint on the associated statistic, and is a part of the general \cite[Regression Theorem]{Quadrangle}. 

    \item Furthermore, the \cite[Regression Theorem]{Quadrangle} enables conditional estimation of various distributional characteristics, i.e., statistics (e.g., quantiles, expectiles) and their mixtures via regression. This is achieved by constructing a quadrangle for the statistic of interest. 
\end{enumerate}

\paragraph{Contributions.} 
This paper introduces a new quadrangle --- the \emph{biased mean quadrangle} --- and investigates its mathematical properties and practical implications. Its statistic is the mean plus a real parameter (\emph{bias}).
The associated risk measure --- the \emph{superexpectation risk} ---generalizes the well-known mean-absolute risk \citep{stochproglec}. The corresponding error, termed the \emph{superexpectation error}, is a stochastic compound functional that balances the averages of the positive and negative parts of the random variable. As in other quadrangles, the regret and deviation components are obtained from the risk and error via the standard RQ identities (see Definition \ref{risk quadrangle}).

In particular, we emphasize the following results.
\begin{enumerate}

 \item \emph{Reparametrization of quantile regression.}
We show that biased-mean regression using the superexpectation error is a reparametrization of quantile regression. In particular, quantiles can be represented as biased means with a prescribed bias $x \in \mathbb{R}$. We prove the equivalence of these two regressions; see Theorem~\ref{Th: equivalence of qr and bmr}.

 \item \emph{OLS as an LP.} When the bias parameter is zero ($x=0$), the resulting (piecewise-linear) mean quadrangle is regular--and, in fact, coherent (see Corollary~\ref{cor:reg+coherence}). This quadrangle serves as a coherent alternative to the standard mean quadrangle \citep[Example~$1'$]{Quadrangle}. In particular, we show the equivalence between OLS and the regression induced by the superexpectation error of this new mean quadrangle (piecewise-linear version), which we term the \emph{superexpectation regression} (see Theorem~\ref{th: equiv of ols and bmr}). Moreover, the superexpectation regression admits an LP formulation, offering computational and modeling advantages (See Subsection~\ref{sec: Sparse Regression} and Subsection~\ref{subsec:sparse_numerical}).

  \item \emph{Equivalence of deviation minimization.} We prove the equivalence between CVaR deviation and superexpectation deviation minimization (see Proposition~\ref{prop:equivalence}) and showcase its practical significance in the case of portfolio optimization (see Example~\ref{portfolio opt example} and Subsection~\ref{subsec:portfolio_equival}).   
 
    \item \emph{Subregularity.} We prove that the biased mean quadrangle is subregular (see Definition~\ref{def:subregquadrangle} and Proposition~\ref{Regularity of BMQ}). The notion of subregularity, first proposed in~\cite{Rockafellar2024,Quadrangle2}, extends the concept of regularity introduced in~\cite{Quadrangle} and can be viewed as an alternative to the well-known notion of coherence~\citep{artzner1999coherent}. The biased mean quadrangle also provides, to our knowledge, the first example of a subregular quadrangle with concrete practical relevance.

\end{enumerate}
\paragraph{Outline.} Section~\ref{sec:Math Background} provides definitions and theorems needed for the subsequent sections. Section~\ref{sec:BMQ} introduces the biased mean quadrangle and examines its properties --- including subregularity, its relationship to the quantile quadrangle, and related implications. Section~\ref{sec: Gen Regression} formulates and analyzes generalized regression problems within the biased mean quadrangle framework. In particular, it establishes the equivalence between quantile regression and biased mean regression. We then specialize to the zero-bias case and show equivalence between OLS and superexpectation regression (i.e., biased mean regression with zero bias). The section concludes with a discussion of sparse regression. Section~\ref{sec: Case Studies} conducts several numerical case studies showcasing the effectiveness and flexibility of the framework. Section~\ref{sec:Conclusion} concludes the paper.

\section{Mathematical Background}\label{sec:Math Background}
This section provides the definitions and theorems necessary for the subsequent analysis. It opens with definitions of value-at-risk, conditional value-at-risk, and the components of the RQ, and closes with two key theorems that are central to the proofs of the main results.  
\subsection{VaR, CVaR, and Risk Quadrangle}
 Let $(\Omega, \mathcal{A}, \bbp)$ be a probability space. For $p \in [1,\infty]$, let ${\cal L}^p(\Omega):={\cal L}^p(\Omega, {\cal A},\bbp)$ be a normed space of all random variables $X$ with $\|X\|_p<\infty$, where $\|X\|_p=(\bbe[|X|^p])^{1/p}$ for $p<\infty$, and $\|X\|_\infty=\mathrm{ess}\sup|X|$.
 %for which we also assume that $\|X\|_1 < \infty$ or $\|X\|_1 \to \infty$ whenever necessary. 
 The cumulative distribution function is denoted by $ F_X(x) = \bbp(X\leq x)$. 
 %The mathematical expectation of a random variable $X$ is denoted by $\bbe[X]$. 
 The positive and negative part of a number $x \in \bbr$ is $x_+ = \max \{ 0, x\}$ and $x_- = \max\{0,-x\}$, respectively. 
 
 Let $\overline{\bbr} := \bbr \cup \{+\infty\}$ denote an extended set of real numbers. Then, a functional $\cF: \cL^p(\Omega) \to \overline{\bbr}$ is called \emph{convex} if 
 $$\cF\left(\lambda X + (1-\lambda)Y\right) \leq \lambda \cF(X) + (1-\lambda)\cF(Y), \ \forall \; X,Y\in \cL^p(\Omega), \ \lambda \in [0,1],
 $$
 and \emph{closed (or lower-semicontinuous)} if 
 $$\left\{ X \in \cL^p(\Omega) \mset \cF(X) \leq c\right\} \ \textrm{is a closed set in } \cL^p(\Omega) \ \forall \; c < \infty.
 $$
\begin{Def}[VaR]\label{var}
The \textit{value-at-risk} (VaR, quantile) of a random variable X at a confidence level $\alpha \in [0,1]$ is defined by
\begin{equation}
    \var_\alpha(X) := [\var_\alpha^- (X), \var_\alpha^+ (X)],
\end{equation}
where
\begin{equation*}
    \var^-_\alpha(X) := \begin{cases}
 \sup \left\{ x\mset F_X(x) < \alpha\right\}, &\alpha \in (0,1] \\
 \esi\; X, &\alpha = 0  \;\;
 
\end{cases}
\end{equation*}

\begin{equation*}\label{quantile+}
    \var_\alpha^+ (X) := \begin{cases}
 \inf \left\{ x\mset F_X(x) > \alpha\right\},&\alpha \in [0,1) \\
 \ess\; X ,&\alpha = 1    \;\;
 
\end{cases}
\end{equation*}
If $\var_\alpha^-(X) = \var_\alpha^+(X)$, then 
$
    \var_\alpha(X) = \var_\alpha^-(X) = \var_\alpha^+(X).
$
\end{Def}

\begin{Def}[CVaR]\label{cvar}
The \textit{conditional value-at-risk} (CVaR) (also called AVaR,  superquantile, expected shortfall) of a random variable $X$ at confidence level $\alpha \in [0,1]$ is defined by 
\begin{equation}
    \cvar_{\alpha}(X) := \dfrac{1}{1-\alpha}\displaystyle \int\limits_{\alpha}^{1}\var^-_{\beta}(X) \, d\beta, \quad \alpha \in (0,1).
\end{equation}
For $\alpha = 0:$ 
\begin{equation*}
    \cvar_0(X) := \bbe[X] \, . 
\end{equation*}
For $\alpha = 1:$
\begin{equation*}
    \cvar_1(X) := \lim_{\tau \to 1} \cvar_{\tau}(X) =  \ess \;X\, . 
\end{equation*}
\end{Def}
\cite{Quadrangle} introduced a novel concept known as the Risk Quadrangle (RQ), aiming to establish a connection between risk management, reliability, statistics, and stochastic optimization theories. The RQ methodology groups functions of a random value $X$ into quadrangles, each comprising four elements: 1) \emph{risk,} providing a numerical surrogate for the overall hazard in $X$, 2) \emph{deviation,} measuring the ``nonconstancy'' in $X$, 3) \emph{regret,} which is similar to the utility function with a negative sign, and 4) \emph{error,} quantifying the ``nonzeroness'' in $X$.   Elements of the quadrangle are ``linked'' by a \emph{statistic} function.
\medskip
$$\eqalign{
   \hskip20pt \text{risk} \cR \,\longleftrightarrow \,\cD \text{deviation} &\cr
   \hskip50pt  \uparrow \hskip09pt \cS \hskip09pt \uparrow &\cr
   \hskip12pt \text{regret} \cV \,\longleftrightarrow \;\cE \text{error} &\cr
}$$
\smallskip

\centerline{ \bf Diagram~1:\quad The Risk Quadrangle\hskip15pt }
\medskip
\begin{Def}[Regular risk]\label{regular risk measure}
A closed convex functional $\cR: \cL^p(\Omega) \to \overline{\bbr} $ is called a \textit{regular measure of risk} if it satisfies:
 $$(R1) \,\ \cR(C) = C, \quad \forall \; C = const \qquad  \text{and} \qquad (R2)\, \ \;\cR(X) > \bbe[X], \quad \forall \; X \neq const\; .$$ 
\end{Def}
\begin{Def}[Coherent risk] A closed convex functional $\cR: \cL^p \to  \overline{\bbr}$ is called a \emph{coherent measure of risk in basic sense} if
\begin{itemize}
    \item[(C0)] $\cR(C) = C  \textrm{ for constants } C;  
$
\item[(C1)] $\cR(\lambda X) = \lambda \cR(X)$ for all $\lambda >0;$  
\item[(C2)] $\cR(X) \leq \cR(Y)$ for all $X$ and $Y$ such that $X \leq Y$ almost surely, 
\end{itemize}
and \emph{coherent in  general sense} if (C1) is left out.
\end{Def}
\begin{Def}[Regular deviation]\label{regular deviation measure}
A closed convex functional $\cD: \cL^p(\Omega) \to \overline{\bbr}^+$ is called a \textit{regular measure of deviation} if it satisfies:

    $$(D1) \,\  \cD(C) = 0, \quad \forall \; C = const \qquad  \text{and} \qquad 
    (D2) \,\ \cD(X) > 0, \quad \forall \; X \neq const.$$

\end{Def}

\begin{Def}[Regular regret]\label{regular regrt measure}
A closed convex functional $\cV: \cL^p(\Omega) \to \overline{\bbr}$ is called a \textit{regular measure of regret} if it satisfies the following axioms

    $$(V1) \,\ \cV(0) = 0 \qquad  \text{and} \qquad 
    (V2)\, \ \cV(X) > \bbe[X], \quad \forall \; X \neq 0\;.$$

\end{Def}

\begin{Def}[Regular error]\label{reg error measure}
A functional $\cE: \cL^p(\Omega) \to \overline{\bbr}^+ $ is called a \textit{regular measure of error} if it satisfies the following axioms:
    $$(E1) \,\ \cE(0) = 0 \qquad \text{end} \qquad 
    (E2)\,\ \cE(X) > 0, \quad \forall \; X \neq 0\;.$$
\end{Def}

\begin{Def}[Regular risk quadrangle] \label{risk quadrangle} A quartet $(\cR,\cD,\cV,\cE)$ of regular measures of risk, deviation, regret, and error is called a \emph{regular risk quadrangle with statistic} $\cS$ if it satisfies the following relationship formulae:
\begin{itemize}
    \item[(Q1)] \textbf{error projection:} $\cD(X)= \min\limits_C\!\lset\cE(X-C)\rset$,
    \item[(Q2)] \textbf{regret formula:} $\cR(X)= \min\limits_C\!\lset C+\cV(X-C)\rset$,
    \item[(Q3)] \textbf{centerness:} $\cR(X) = \cD(X) + \bbe[X], \quad \cV(X) = \cE(X) + \bbe[X],$
\end{itemize}
% and the quartet $(\cR,\cD,\cV,\cE)$ is ``linked'' by the statistic $\cS(X)$ that satisfies: 
\begin{itemize}
    \item[(Q4)] \textbf{statistic:} $ \cS(X) =\argmin_C\!\lset\cE(X-C)\rset
          =\argmin_C\!\lset C+\cV(X-C)\rset. $
\end{itemize}
\end{Def}

\begin{Def}[Subregular risk quadrangle]\label{def:subregquadrangle}
A quartet $(\cR,\cD,\cV,\cE)$ is called a \emph{subregular risk quadrangle} if
measures of risk, deviation, regret, and error comprising it are regular with non-strict inequalities in axioms (R2), (D2), (V2), (E2) for any $X$, and additionally there exists $\lambda > 0$ such that $\cR(\lambda X) > \bbe[\lambda X], \  \cV(\lambda X) > \bbe[\lambda X]$ for non-constant $X$ and $  \cD(\lambda X) > 0, \  \cE(\lambda X) > 0$ for non-zero $X.$
\end{Def}

\begin{Def}[Coherent risk quadrangle]
 A quartet $(\cR,\cD,\cV,\cE)$ satisfying relationship formulae (Q1)--(Q4), where $\cR$ is a coherent measure of risk is called a \emph{coherent risk quadrangle.} 
\end{Def}
\begin{rem}[Error Projection and Regret Formula]\label{remark proj cert} To prove that a given quartet $(\cR, \cD, \cV, \cE)$ is a quadrangle, it is sufficient to verify either conditions (Q1) and (Q3), or conditions (Q2) and (Q3), as conditions (Q1) and (Q2) are intrinsically linked through the condition (Q3). Indeed,
\begin{equation*}
    \cR(X) = \min\limits_C\!\lset C+\cV(X-C)\rset = \min\limits_C\!\lset\cE(X-C)\rset + \bbe[X] = \cD(X) + \bbe[X].
\end{equation*}
Consequently, (Q4) holds automatically. 
\end{rem}
The following example represents a well-known regular risk quadrangle that has gained widespread utilization in optimization, statistics, and risk management.
\begin{Ex}[Quantile Quadrangle] 
The quantile quadrangle (see Diagram 2) is named after the VaR (quantile) statistic. This quadrangle, as introduced by \cite{Quadrangle}, establishes a relationship between the CVaR optimization technique discussed in their previous works \citep{CVaR,CVaR2} and quantile regression \citep{KoenkerBassett, KoenkerBook}.
    \medskip
$$\eqalign{
  \hskip40pt
  \cS_\alpha (X)=\var_\alpha(X)=\!\!\text{VaR}
&\cr
  \hskip40pt
  \cR_\alpha  (X)  =\cvar_\alpha(X) =\!\!\text{CVaR} &\cr
  \hskip40pt
  \cD_\alpha  (X)=\cvar_\alpha(X) - \bbe[X]
                    =\!\!\text{CVaR deviation} &\cr
  \hskip40pt
  \cV_\alpha  (X) =\frac{1}{1-\alpha} \bbe [X_+]
                    =\!\!\text{scaled partial moment} &\cr
  \hskip40pt
  \cE_\alpha  (X) = \bbe [ \frac{\alpha}{1-\alpha} X_+ +X_-]
                   =\!\!\text{normalized Koenker--Basset error} &}
$$
\smallskip

\centerline{{\bf Diagram~2: \quad Quantile Quadrangle}
                (at confidence level $\alpha\in (0,1)$) \label{diag 2}}

\medskip
\end{Ex}
\begin{rem}[Notation] In the following sections, the distinction between elements of different quadrangles (e.g., different risk measures or error measures) is achieved by assigning a subscript parameter to each element of a quadrangle. For example, $\cR_\alpha(X)$  stands for $\cvar_\alpha(X),$ $\cR_x$  stands for the superexpectation risk (see Diagram 3), and similarly for other quadrangle corners. To avoid confusion, we never explicitly write expressions like $\cR_{0.875}(X).$ Instead, we write $\cR_\alpha(X)$ for $\alpha = 0.875.$ The only exception is $\cR_0(X),$ which is $\cR_x(X)$ for $x=0$. 
    
\end{rem}

\subsection{Superexpectation and Dual CVaR Optimization Formula}
The popularity of the CVaR risk measure can mostly be attributed to two essential properties. Firstly, its interpretability as the average of the $(1-\alpha) \times 100 \ \%$ worst-case outcomes within the distribution's tail. Secondly, its computational efficiency due to the CVaR optimization formula \citep{CVaR,CVaR2}.
\begin{Th}[CVaR Optimization Formula] \label{CVaR opt th}
For a random variable $X \in \cL^1(\Omega)$ and $\alpha \in (0,1)$, it holds
\begin{equation}\label{CVaR opt formula}
    \cvar_\alpha(X) = \min_{x \in \bbr} \left\{ x + \frac{1}{1-\alpha}\bbe [X-x]_+\right\},
\end{equation}
and the set of minimizers for (\ref{CVaR opt formula}) is $\var_\alpha(X)$.
\end{Th}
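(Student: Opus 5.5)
The plan is to study the convex function $\phi(x) := x + \frac{1}{1-\alpha}\bbe[X-x]_+$ on $\bbr$ directly. First we identify its minimizers through one-sided derivatives. Then we evaluate $\phi$ at a minimizer using the quantile representation of $X$ underlying Definition~\ref{cvar}.

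\textbf{Step 1: convexity and coercivity.} Since $X\in\cL^1(\Omega)$ and $t\mapsto (t-x)_+$ is convex and Lipschitz in $x$, the function $\phi$ is finite, convex and continuous on $\bbr$. As $x\to+\infty$ we have $\bbe[X-x]_+\to 0$, so $\phi(x)\to+\infty$. As $x\to-\infty$ we have $\bbe[X-x]_+ = \bbe[X]-x+\bbe[X-x]_-$ with $\bbe[X-x]_-\to 0$. Hence $\phi(x)\approx x\bigl(1-\frac{1}{1-\alpha}\bigr)+\text{const}\to+\infty$, because $\alpha\in(0,1)$. The set of minimizers is therefore a nonempty compact interval.

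\textbf{Step 2: the set of minimizers.} By dominated convergence, the right derivative of $x\mapsto\bbe[X-x]_+$ is $-\bbp(X>x)$ and the left derivative is $-\bbp(X\ge x)$. Hence
\[
\phi'_+(x) = 1-\frac{1-F_X(x)}{1-\alpha},\qquad \phi'_-(x) = 1-\frac{1-F_X(x^-)}{1-\alpha}.
\]
A point $x$ minimizes $\phi$ iff $\phi'_-(x)\le 0\le \phi'_+(x)$, i.e.\ iff $F_X(x^-)\le\alpha\le F_X(x)$. We then check against Definition~\ref{var} that this set is exactly $[\var^-_\alpha(X),\var^+_\alpha(X)]$. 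If $F_X(x)<\alpha$, then $x\le \var^-_\alpha$ and all points $y<\var^-_\alpha$ have $F_X(y)<\alpha$, so $\var^-_\alpha$ is the smallest point with $F_X\ge\alpha$ (using right-continuity). Symmetrically, $\var^+_\alpha$ is the largest point with $F_X(x^-)\le\alpha$. This monotonicity bookkeeping at atoms and flat pieces of $F_X$ is the only delicate part.

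\textbf{Step 3: the minimum value.} Take $q=\var^-_\alpha(X)$ and let $U$ be uniform on $(0,1)$, so that $X$ has the same law as $\var^-_U(X)$ (quantile transform). Then
\[
\bbe[X-q]_+ = \int_0^1 \bigl(\var^-_\beta(X)-q\bigr)_+\,d\beta = \int_\alpha^1 \bigl(\var^-_\beta(X)-q\bigr)\,d\beta.
\]
The second equality holds because $\var^-_\beta\le q$ for $\beta\le\alpha$ and $\var^-_\beta\ge q$ for $\beta>\alpha$, by monotonicity of $\beta\mapsto\var^-_\beta$. Therefore
\[
\phi(q) = q+\frac{1}{1-\alpha}\int_\alpha^1\var^-_\beta(X)\,d\beta - q = \cvar_\alpha(X).
\]
Since $\phi$ is constant on its set of minimizers, \eqref{CVaR opt formula} follows with a minimum attained exactly on $\var_\alpha(X)$.

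I expect the main obstacle to be Step 2: carefully matching the subdifferential condition $F_X(x^-)\le\alpha\le F_X(x)$ with the sup/inf definitions of $\var^\pm_\alpha$ when $F_X$ has jumps or flat stretches. I would handle it by proving the two inclusions separately, using right-continuity of $F_X$ and the strict inequalities in Definition~\ref{var}.
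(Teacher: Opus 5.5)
Your proof is correct. The one-sided derivatives, the identification of $\{x : F_X(x^-)\le\alpha\le F_X(x)\}$ with $[\var^-_\alpha(X),\var^+_\alpha(X)]$ via right-continuity of $F_X$, and the evaluation $\phi(\var^-_\alpha(X))=\cvar_\alpha(X)$ all check out. (Step 1 is in fact redundant, since Step 2 already exhibits a nonempty set of minimizers.)

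The paper does not prove this theorem; it quotes it from \citep{CVaR,CVaR2}. Your Steps 1--2 are essentially the Rockafellar--Uryasev argument for general distributions: convexity of $\phi$, right and left derivatives $1-\frac{1-F_X(x)}{1-\alpha}$ and $1-\frac{1-F_X(x^-)}{1-\alpha}$, and minimizers characterized by $0\in\partial\phi(x)$.

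The real difference is in Step 3. Rockafellar and Uryasev work with CVaR defined as the mean of the generalized $\alpha$-tail distribution. Equivalently, this is the weighted average $\lambda_\alpha\var^-_\alpha(X)+(1-\lambda_\alpha)\bbe[X\mid X>\var^-_\alpha(X)]$ with $\lambda_\alpha=(F_X(\var^-_\alpha(X))-\alpha)/(1-\alpha)$, and they match the minimum value against that expression. You instead evaluate directly against the quantile-integral form in Definition~\ref{cvar}. You use the quantile transform and the monotonicity of $\beta\mapsto\var^-_\beta(X)$ to reduce $\bbe[X-q]_+$ to $\int_\alpha^1(\var^-_\beta(X)-q)\,d\beta$.

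Since the paper adopts the integral definition, your route is the more economical one here. It avoids a separate argument that the tail-mean and quantile-integral definitions of CVaR coincide, which needs extra care when $F_X$ jumps at $\var^-_\alpha(X)$.
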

\cite{rockafellar2014random} defined the \emph{superexpectation} (SE) function (also studied by \cite{ogryczak2002dual}) as follows 
\begin{equation}\label{superexp}
    E_{\! X}(x) := \bbe [X-x]_+ + x, \quad x \in \bbr \;.
\end{equation}
Formulas 
\eqref{CVaR opt formula} and 
\eqref{superexp}
are closely related as they share the same 
partial moment function  $\bbe [X-x]_+$.
\cite{rockafellar2014random} showed that the superexpectaion is related to the CVaR through the Fenchel--Legendre transformation and proved the following Theorem \ref{dual cvar th}.
\begin{Def}[Fenchel--Legendre transform]
Let $f:\bbr \to \ebr$. Then a function defined by the following equality
\begin{equation}
    f^*(\alpha) = \sup_{x \in \operatorname{dom} f} \{ \alpha x - f(x) \},
\end{equation}
is called the \emph{conjugate} of $f$ or its \emph{Fenchel--Legendre transform}.
\end{Def}
\begin{Th}[Dual CVaR Optimization Formula]\label{dual cvar th}
For a random variable $X \in \cL^1(\Omega)$ and a scalar $x \in \bbr$, it holds
\begin{equation}\label{dual cvar formula}
    E_{\! X}(x) = \max_{\alpha \in [0,1]}\left \{\alpha x +(1-\alpha)\cvar_{\alpha}(X)\right \},
\end{equation}
and the set of maximizers for (\ref{dual cvar formula}) is $[\bbp(X<x), \bbp(X\leq x)]$. Or in terms of the Fenchel--Legendre transform:
\begin{equation}
    E_{\! X}^*(\alpha) = -(1-\alpha)\cvar_{\alpha}(X) \, .
\end{equation}
\end{Th}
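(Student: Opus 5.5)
The plan is to prove the conjugate identity $E_X^*(\alpha) = -(1-\alpha)\cvar_\alpha(X)$ directly from the CVaR Optimization Formula (Theorem~\ref{CVaR opt th}). Then \eqref{dual cvar formula} follows from Fenchel--Moreau biconjugation, and the maximizer set follows from subdifferential calculus.

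\emph{Step 1: regularity of $E_X$.} For $X\in\cL^1(\Omega)$ the function $x\mapsto \bbe[X-x]_+$ is finite, convex and $1$-Lipschitz. Hence $E_X$ is a finite, convex, continuous function on $\bbr$. Its one-sided derivatives are
\[
E_X'(x^-) = 1-\bbp(X\ge x) = \bbp(X<x), \qquad E_X'(x^+) = 1-\bbp(X>x) = \bbp(X\le x),
\]
obtained by differentiating under the expectation (monotone convergence of difference quotients). So $\partial E_X(x) = [\bbp(X<x),\bbp(X\le x)]\subset[0,1]$.

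\emph{Step 2: compute the conjugate.} For $\alpha\in(0,1)$,
\[
E_X^*(\alpha) = \sup_x\{\alpha x - x - \bbe[X-x]_+\} = -(1-\alpha)\inf_x\Big\{x + \tfrac{1}{1-\alpha}\bbe[X-x]_+\Big\} = -(1-\alpha)\cvar_\alpha(X),
\]
where the last equality is Theorem~\ref{CVaR opt th}. The endpoints need separate treatment.
\begin{itemize}
\item At $\alpha=0$: $\sup_x\{-x-\bbe[X-x]_+\}=\sup_x\{-\bbe\max(X,x)\} = -\bbe[X]$, since $\max(X,x)\ge X$, with the limit attained as $x\to-\infty$ by monotone convergence. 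This matches $\cvar_0=\bbe[X]$.
\item At $\alpha=1$: $\sup_x\{-\bbe[X-x]_+\}=0$, and $(1-\alpha)\cvar_\alpha\to0$, consistent with the convention $0\cdot\ess X=0$.
\item For $\alpha\notin[0,1]$: $E_X^*(\alpha)=+\infty$, because $E_X(x)-x\to0$ as $x\to+\infty$ and $E_X(x)-\bbe[X]\to0$ as $x\to-\infty$.
\end{itemize}

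\emph{Step 3: biconjugation and maximizers.} Since $E_X$ is closed, proper and convex, $E_X=E_X^{**}$. This gives
\[
E_X(x)=\sup_{\alpha\in[0,1]}\{\alpha x+(1-\alpha)\cvar_\alpha(X)\}.
\]
By the Fenchel--Young equality, $\alpha$ attains the supremum if and only if $\alpha\in\partial E_X(x)$. The latter set is nonempty, so the supremum is a maximum and the maximizer set is $[\bbp(X<x),\bbp(X\le x)]$ by Step 1.

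The main obstacle is the endpoint case $\alpha=1$ when $\ess X=+\infty$. There the expression $(1-\alpha)\cvar_\alpha(X)$ must be read as the limit value $0$ (equivalently, as the lower semicontinuous hull of $\alpha\mapsto -(1-\alpha)\cvar_\alpha(X)$). I would verify that $(1-\alpha)\cvar_\alpha(X)=\int_\alpha^1\var^-_\beta(X)\,d\beta\to0$ by integrability of the quantile function, and adopt this convention explicitly.
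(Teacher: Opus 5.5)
Your proof is correct and complete. The paper gives no proof of this theorem of its own; it quotes the result from \cite{rockafellar2014random}, so the comparison has to be with that source.

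There, the identity comes out of the monotone-relation picture. The gap-filled distribution function is a maximal monotone relation, hence the subdifferential of a convex function, namely $E_{\! X}$ with $\partial E_{\! X}(x)=[\bbp(X<x),\bbp(X\le x)]$. Its inverse, the quantile relation $\alpha\mapsto\var_\alpha(X)$, is then the subdifferential of the conjugate $E_{\! X}^*$. Integrating the quantile and fixing the constant at an endpoint gives $E_{\! X}^*(\alpha)=-\int_\alpha^1\var^-_\beta(X)\,d\beta=-(1-\alpha)\cvar_\alpha(X)$. In that route the Rockafellar--Uryasev formula (Theorem~\ref{CVaR opt th}), including its minimizer set $\var_\alpha(X)=\partial E_{\! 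X}^*(\alpha)$, is a by-product.

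You instead take Theorem~\ref{CVaR opt th} as an input. This turns the conjugate computation on $(0,1)$ into a one-line rescaling, and leaves only the endpoint checks plus Fenchel--Moreau and Fenchel--Young.

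What each route buys:
\begin{itemize}
\item Yours is shorter given what the paper already states. It also makes explicit that Theorems~\ref{CVaR opt th} and~\ref{dual cvar th} are the two halves of a single Fenchel--Young relation: the minimizers there are $\partial E_{\! X}^*(\alpha)$, and the maximizers here are $\partial E_{\! X}(x)$.
\item The monotone-relation route does not presuppose the CVaR minimization formula, and it explains why quantiles appear as the subgradients of the conjugate in the first place.
\end{itemize}

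On the endpoint you flag: the statement itself already needs the convention $0\cdot\cvar_1(X)=0$ when $\ess X=+\infty$. Your direct computation $E_{\! X}^*(1)=0$ settles that case. The limit $\int_\alpha^1\var^-_\beta(X)\,d\beta\to 0$ is only a consistency check. It is also automatic, because a closed proper convex function of one variable is continuous relative to its domain, here $[0,1]$.
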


\section{Biased Mean Quadrangle}\label{sec:BMQ}
This section focuses on the biased mean quadrangle (see Diagram~3) and explores its mathematical properties. It then establishes its relation to the quantile quadrangle, uncovering various potential applications. 
\medskip
$$\eqalign{
  \hskip20pt
  \cS_x (X) = x + \bbe[X] = \text{biased mean}
&\cr
  \hskip20pt
  \cR_x (X)  = E_{X-\bbe[X]}(x) - x_+ +\bbe[X]\; \;= \;\;\bbe [X-\bbe[X]-x]_+ - x_- +\bbe[X] = \!\!\text{superexpectation risk} 
&\cr
  \hskip20pt
  \cD_x (X) = E_{X-\bbe[X]}(x) - x_+\;\; = \;\;\bbe [X-\bbe[X]-x]_+ - x_- = \text{superexpectation deviation}
  &\cr
  \hskip20pt
  \cV_x  (X) = \max \{\bbe[X_-] - x_+\;,\; \bbe[ X_+] - x_-\} + \bbe[X]  = \text{superexpectation regret}
  &\cr
  \hskip20pt
  \cE_x  (X) = \max \{\bbe[X_-] - x_+\;, \;\bbe[X_+] - x_-\}=\text{superexpectation error}}
$$
\smallskip

\centerline{{\bf Diagram~3\label{diag 3}: \quad Biased Mean Quadrangle}
                ($x\in \bbr$) }

\medskip
Unless stated otherwise, assume $X \in \cL^1(\Omega)$.

\begin{Prop}[Biased Mean Quadrangle]\label{Regularity of BMQ}
The quartet $\left(\cR_x, \cD_x, \cV_x, \cE_x \right)$ is a subregular quadrangle with statstic $\cS_x$.
\end{Prop}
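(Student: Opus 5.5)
The plan is to verify the relationship formulae through Remark~\ref{remark proj cert}: check (Q3) and (Q1), then read off (Q4). After that, check the non-strict axioms and the ``scaled strictness'' condition of Definition~\ref{def:subregquadrangle}.

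\textbf{Centerness (Q3).} This holds by inspection of Diagram~3. We have $\cR_x(X)=\cD_x(X)+\bbe[X]$ and $\cV_x(X)=\cE_x(X)+\bbe[X]$.

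\textbf{Error projection (Q1) and statistic (Q4).} Fix $X$ and put $Y=X-C$. Set $b(C):=\bbe[(X-C)_+]$ and $a(C):=\bbe[(X-C)_-]$. The identity $Y_+-Y_-=Y$ gives $a(C)=b(C)-\bbe[X]+C$. Moreover, $b$ is convex and nonincreasing with slope $-\bbp(X>C)\in[-1,0]$, so $a$ is convex and nondecreasing with slope $1-\bbp(X>C)\ge 0$. Hence
\[
\cE_x(X-C)=\max\{a(C)-x_+,\; b(C)-x_-\}
\]
is the maximum of a nondecreasing and a nonincreasing function. The difference $(b-x_-)-(a-x_+)=\bbe[X]-C+x$ is strictly decreasing and vanishes exactly at $C^*=\bbe[X]+x$. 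To the left of $C^*$ the maximum equals $b-x_-$, which is nonincreasing. To the right it equals $a-x_+$, which is nondecreasing. So $C^*$ is a minimizer, and
\[
\min_C \cE_x(X-C)=b(C^*)-x_-=\bbe[X-\bbe[X]-x]_+-x_-=\cD_x(X).
\]
Thus (Q1) holds, and by Remark~\ref{remark proj cert} so do (Q2) and (Q4), with $\cS_x(X)=\bbe[X]+x$ in the argmin.

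The main obstacle is that the argmin can be a whole interval. Strict monotonicity of $b$ to the left of $C^*$ fails once $C\ge \ess X$, and strict monotonicity of $a$ to the right fails once $C\le \esi X$. For example, $X\equiv 0$ and $x=1$ give argmin $[0,1]$. I would therefore state (Q4) in the set-valued sense used for $\var_\alpha$: the statistic is the argmin set, and it contains $\bbe[X]+x$. It reduces to this single point whenever $\esi X<\bbe[X]+x<\ess X$. I would check this against the paper's intended reading, since it is exactly the phenomenon that makes the quadrangle only subregular.

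\textbf{Axioms.} Convexity holds because each of $\cE_x,\cD_x$ is a maximum, or a composition, of convex functions of $X$ with affine shifts. Closedness holds because all four functionals are $1$-Lipschitz, up to constants, in $\cL^1$.

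For the error: since one of $x_+,x_-$ is zero, $\cE_x(0)=\max\{-x_+,-x_-\}=0$. Also $\cE_x(X)\ge 0$; for instance, if $x\ge0$ then $\cE_x(X)\ge \bbe[X_+]\ge 0$, and symmetrically if $x<0$.

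For the deviation: $\cD_x(C)=(-x)_+-x_-=0$. Jensen's inequality gives $\bbe[Z-x]_+\ge(\bbe Z-x)_+=x_-$ for $Z=X-\bbe X$, so $\cD_x\ge0$.

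For the strict versions, let $X\ne0$ for the error, or $X$ nonconstant for the deviation, and choose $\lambda$ large. Then $\lambda\bbe[X_+]>x_-$ or $\lambda\bbe[X_-]>x_+$, giving $\cE_x(\lambda X)>0$. Likewise $\bbe[\lambda Z-x]_+\ge\lambda\bbe[Z_+]-x_+$ exceeds $x_-$, since $\bbe Z_+>0$, giving $\cD_x(\lambda X)>0$.

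The risk and regret axioms (R1), (R2), (V1), (V2), together with their $\lambda$-strict versions, then follow immediately from (Q3).
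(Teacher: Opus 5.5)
Your proof is correct and has the same skeleton as the paper's: (Q3) by inspection, (Q1) by minimizing $\cE_x(X-C)$ over $C$, (Q2) and (Q4) through Remark~\ref{remark proj cert}, and then the axioms.

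The substantive difference is at the error-projection step. The paper claims the minimum is attained \emph{if and only if} the two branches $\bbe[X-C]_- - x_+$ and $\bbe[X-C]_+ - x_-$ coincide, that is, only at $C=x+\bbe[X]$. Your monotonicity argument uses a nonincreasing branch and a nondecreasing branch whose difference $\bbe[X]+x-C$ is strictly decreasing. It proves only the ``if'' part, which is all (Q1) needs. Your example $X\equiv 0$, $x=1$ shows the ``only if'' part is false.

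Pushing your argument one step further identifies the whole argmin:
\begin{itemize}
\item it is $\{x+\bbe[X]\}$ whenever $\esi X\le x+\bbe[X]\le \ess X$, so your strict inequalities can be relaxed;
\item it is $[\ess X,\, x+\bbe[X]]$ when $x+\bbe[X]>\ess X$;
\item it is $[x+\bbe[X],\, \esi X]$ when $x+\bbe[X]<\esi X$.
\end{itemize}
In the last two cases $\cD_x(X)=0$ even for nonconstant $X$. So the value $\cD_x$ in (Q1) is right. However, (Q4) holds only in the set-valued sense you propose: $x+\bbe[X]$ always belongs to the argmin but need not equal it. The paper's proof overlooks this.

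A smaller difference concerns the axioms. The paper checks only the error axioms and leaves their transfer to $\cD_x$, $\cR_x$, $\cV_x$ implicit. You verify $\cD_x\ge 0$ by Jensen and $\cD_x(\lambda X)>0$ via the pointwise bound $(\lambda Z-x)_+\ge \lambda Z_+ - x_+$ directly. This makes the $\lambda$-strict deviation property self-contained. Your ``take $\lambda$ large'' argument also replaces the paper's explicit case split with $\lambda = x/\bbe[X_-]+\varepsilon$.
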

\begin{proof}
To prove that $\left(\cR_x, \cD_x, \cV_x, \cE_x \right)$ is a quadrangle, it suffices (see Remark \ref{remark proj cert}) to establish the validity of (Q1) and (Q3). To show that the quadrangle is subregular, it suffices to demonstrate that $\cE_x$ satisfies the Definition \ref{reg error measure} with a non-strict inequality in (E2) and show that there exists $\lambda >0$ such that $\cE_x(\lambda X) > 0$ for non-zero $X$. 

Indeed, $\cE_x$ is a closed convex functional as a maximum of closed convex functionals and
$$\cE_x(0) = \max\{-x_+,-x_-\} = 0.
$$
Hence $\cE_x(X)$ satisfies (E1). Further, since $\bbe[X_-]$ and $\bbe[X_+]$ are non-negative
$$\cE_x(X) = \max \{\bbe[X_-] - x_+, \bbe[X_+] - x_-\} \geq \max\{-x_+,-x_-\} = 0, \quad \forall \; X \neq 0.
$$
There are only two cases when $\forall \; X \neq 0, \quad \cE_x(X) = 0:$ 
\begin{itemize}
    \item[1)] $x>0, \ \bbe[X_+] = 0, \ \bbe[X_-]-x \leq0,$ hence for any $\varepsilon>0,$ $\lambda = \dfrac{x}{\bbe[X_-]} + \varepsilon > 0$ the following is valid $\cE_x(\lambda X) > 0.$
    \item[2)] $x<0, \ \bbe[X_-] = 0, \ \bbe[X_+]+x \leq0,$ hence for any $\varepsilon>0,$ $\lambda = \dfrac{-x}{\bbe[X_+]} + \varepsilon > 0$ the following is valid  $\cE_x(\lambda X) > 0.$
\end{itemize}
Thus $\cE_x$ is a subregular error. Next, let us consider the error projection formula:
\begin{equation}\label{error proj}
\min_C\{ \cE_x(X-C)\} = \min_C \max\{\bbe[X-C]_- -x_+, \bbe[X-C]_+ -x_- \}.    
\end{equation}
Note that the minimum in \eqref{error proj} is attained if and only if 
$$\bbe[X-C]_- -x_+ = \bbe[X-C]_+ -x_-
$$
or equivalently
\begin{equation}\label{statistic}
  C = x + \bbe[X] = \cS_x(X).  
\end{equation}
Therefore, by substituting \eqref{statistic} into \eqref{error proj}, we obtain 
\begin{equation*}
\min_C\{ \cE_x(X-C)\} = \bbe[X- \bbe[X] - x]_+ -x_- = E_{X-\bbe[X]}(x)-x_+ = \cD_x(X),
\end{equation*}
which completes the proof.

\end{proof}

\begin{Cor}[Mean Quadrangle (mean upper semideviation version)] \label{cor mean-upp-sem dev}
An important special case of the biased mean quadrangle is when the parameter $x=0$. In this case, the statistic equals the expected value, see Diagram~4. The deviation in this quadrangle is the mean-upper-semideviation of order 1. Therefore, we call this quadrangle the ``mean quadrangle'' (mean-upper-semideviation version). This quadrangle can be viewed as a ``linearization'' of the standard mean quadrangle \cite[Example 1$'$]{Quadrangle}. Notably, this quadrangle is coherent (see Corollary~\ref{cor:reg+coherence}).
\end{Cor}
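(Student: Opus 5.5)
Setting $x=0$ in Diagram~3 gives the quartet
\begin{gather*}
\cS_0(X)=\bbe[X],\qquad
\cE_0(X)=\max\{\bbe[X_-],\bbe[X_+]\},\qquad
\cV_0(X)=\cE_0(X)+\bbe[X],\\
\cD_0(X)=\bbe[X-\bbe[X]]_+,\qquad
\cR_0(X)=\bbe[X]+\bbe[X-\bbe[X]]_+ .
\end{gather*}
The quadrangle relations (Q1)--(Q4) for $x=0$ follow from Proposition~\ref{Regularity of BMQ}. In particular the statistic is $\cS_0(X)=\bbe[X]$, and $\cD_0$ is the mean-upper-semideviation of order one. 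What remains is to show that this quadrangle is regular (not just subregular) and that $\cR_0$ is coherent.

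\textbf{Regularity.} Take $X\neq 0$. If $\bbe[X_+]=\bbe[X_-]=0$, then $X=0$ a.s., so one of these expectations is positive and $\cE_0(X)>0$. Hence (E2) holds strictly and $\cE_0$ is regular. Next take $X$ nonconstant. Then $X-\bbe[X]$ is nonzero with zero mean, so its positive part has positive expectation, and $\cD_0(X)>0$. By (Q3) this gives $\cR_0(X)>\bbe[X]$ and $\cV_0(X)>\bbe[X]$ for $X\neq0$. Convexity and closedness come from Proposition~\ref{Regularity of BMQ}.

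\textbf{Coherence.} I verify (C0)--(C2) for $\cR_0$.
\begin{itemize}
\item[(C0)] For a constant $C$, $X-\bbe[X]=0$, so $\cR_0(C)=C$.
\item[(C1)] Positive homogeneity holds because both $\bbe[\cdot]$ and $(\cdot)_+$ are positively homogeneous.
\item[(C2)] Monotonicity is the only real step. The cleanest route is to write
$$\cR_0(X)=\max_{\alpha\in[0,1]}\{(1-\alpha)\cvar_\alpha(X)+\alpha\bbe[X]\},$$
which is Theorem~\ref{dual cvar th} applied with bias $x=\bbe[X]$. Each term in the maximum is a mixture of $\cvar_\alpha$ and $\cvar_0$ with nonnegative weights summing to one, so it is monotone, and a pointwise maximum of monotone functionals is monotone.
\end{itemize}
A direct alternative for (C2): set $Z=Y-X\ge0$. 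Then
$$\cR_0(Y)-\cR_0(X)=\bbe[Z]+\bbe[(Y-\bbe Y)_+-(X-\bbe X)_+].$$
The function $t\mapsto t_+$ is $1$-Lipschitz, so the second expectation is at least $-\bbe|Z-\bbe Z|$. This is not always enough, so instead I use the representation
$$\cR_0(X)=\bbe\big[X\,(1+\mathbb 1_{A}-\bbp(A))\big],\qquad A=\{X>\bbe X\},$$
or, more generally, the supremum of $\bbe[XQ]$ over densities $Q=1+W-\bbe W$ with $0\le W\le 1$. Every such $Q$ is nonnegative, which gives monotonicity.

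The main obstacle is this monotonicity step. I would rely on the dual CVaR formula, Theorem~\ref{dual cvar th}, since it gives the result most directly.
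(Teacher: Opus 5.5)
Your proof is correct. For the one substantive claim, coherence of $\cR_0$, it takes a different route from the paper. The paper reads the statistic and deviation off Diagram~3 at $x=0$. It gets regularity in Corollary~\ref{cor:reg+coherence} from strict positivity of $\cE_0$ together with Proposition~\ref{Regularity of BMQ}, and for coherence it simply cites \citep{stochproglec}. Your regularity argument is the same, just spelled out for $\cD_0$ as well.

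For monotonicity you give a self-contained argument instead. Your main route notes that $\cR_0(X)=E_{\! X}(\bbe[X])$ and applies Theorem~\ref{dual cvar th} at the point $x=\bbe[X]$. This yields exactly the Kusuoka representation \eqref{Kusuoka of sem risk}, which the paper obtains a little later from Proposition~\ref{prop 2}. It is legitimate even though the evaluation point depends on $X$: each $\phi_\alpha(X)=(1-\alpha)\cvar_\alpha(X)+\alpha\bbe[X]$ is a functional of $X$ alone, so $\cR_0=\max_{\alpha}\phi_\alpha$ pointwise, and monotonicity of each $\phi_\alpha$ passes to the maximum.

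Your second route, $\cR_0(X)=\sup\{\bbe[XQ]: Q=1+W-\bbe[W],\ 0\le W\le 1\}$, is even more elementary. It needs only $\bbe[(X-\bbe[X])W]\le\bbe[(X-\bbe[X])_+]$, with equality when $W$ is the indicator of $\{X>\bbe[X]\}$. It also gives more: every such $Q$ satisfies $0\le Q\le 2$, so $\cR_0$ is a supremum of continuous linear functionals on $\cL^1$. Convexity, positive homogeneity, monotonicity and closedness (indeed $2$-Lipschitz continuity) then all follow at once. By contrast, the paper's citation and your appeal to Proposition~\ref{Regularity of BMQ} leave closedness of $\cR_0$ implicit, since that proposition only treats $\cE_x$.

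The paper's version is shorter but leans on an external reference; yours makes the corollary checkable from the paper's own tools. Two cosmetic fixes: drop the abandoned Lipschitz attempt. Also, $\cV_0(X)>\bbe[X]$ for $X\neq 0$ comes from $\cE_0(X)>0$ via (Q3), not from $\cD_0>0$ as your sentence currently reads.
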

\medskip
$$\eqalign{
  \hskip40pt
  \cS_0 (X) = \bbe[X] 
  % = \text{mean}
&\cr
  \hskip40pt
  \cR_0 (X)  = E_{X-\bbe[X]}(0) +\bbe[X]  = \bbe[X-\bbe[X]]_+ + \bbe[X] 
  % = \text{mean upper semideviation}
&\cr
  \hskip40pt
  \cD_0 (X) = E_{X-\bbe[X]}(0) = \bbe[X-\bbe[X]]_+ 
  % = \text{upper semideviation}
  &\cr
  \hskip40pt
  \cV_0  (X) = \max \{\bbe[X_-], \bbe[ X_+]\} + \bbe[X] 
  % = \text{upper semideviation regret} 
  &\cr
  \hskip40pt
  \cE_0  (X) = \max \{\bbe[X_-], \bbe[X_+] \}
  % = \text{upper semideviation error}
  }
$$
\smallskip

\centerline{{\bf Diagram~4 : \quad Mean Quadrangle (mean-upper-semideviation version)} }\label{diagram4}

\medskip

\begin{Cor}[Mean Quadrangle ($\cL^1$ norm version)]\label{cor mean-abs dev} The mean quadrangle (mean-upper-semideviation version) admits an equivalent representation in terms of $\cL^1$ norm (see Diagram~5) since (see, \cite{stochproglec})
$$\bbe[X - \bbe[X]]_+ = \bbe[X - \bbe[X]]_- = \frac{1}{2}\bbe[|X-\bbe[X]|] = \frac{1}{2}\|X- \bbe[X]\|_1 \quad \forall \; X \in \cL^1(\Omega).
$$ 
Additionally, from the fact that 
$$\max\{x,y\} = \frac{x+y}{2} + \frac{|x-y|}{2} \quad \forall \; x,y \in \bbr
$$
the expression for the error $\cE_0$ follows.
\end{Cor}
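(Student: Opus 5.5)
The corollary asserts that the $x=0$ member of the biased mean quadrangle (Diagram~4) can be rewritten entirely in terms of the $\cL^1$ norm. The plan is to derive each corner of Diagram~5 from Diagram~4 by two elementary identities. Nothing about the quadrangle structure needs re-proving: (Q1)--(Q4) and subregularity were already established in Proposition~\ref{Regularity of BMQ} for all $x$, hence for $x=0$. Rewriting the functionals in equivalent closed form does not change them, so they remain the same quadrangle with statistic $\cS_0(X)=\bbe[X]$.

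\emph{Step 1 (deviation and risk).} Set $Z = X-\bbe[X]$. Then $\bbe[Z]=0$, and since $Z = Z_+ - Z_-$ we get $\bbe[Z_+]=\bbe[Z_-]$. Using $|Z| = Z_+ + Z_-$ then gives
\[
\bbe[Z_+] = \tfrac12\bbe|Z| = \tfrac12\|X-\bbe[X]\|_1 .
\]
Both expectations are finite because $X\in\cL^1(\Omega)$. Hence
\[
\cD_0(X)=\tfrac12\|X-\bbe[X]\|_1 \quad\text{and}\quad \cR_0(X)=\bbe[X]+\tfrac12\|X-\bbe[X]\|_1 .
\]

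\emph{Step 2 (error and regret).} Apply $\max\{a,b\}=\tfrac{a+b}{2}+\tfrac{|a-b|}{2}$ with $a=\bbe[X_-]$ and $b=\bbe[X_+]$. We have $a+b=\bbe|X|=\|X\|_1$ and $b-a=\bbe[X]$. Therefore
\[
\cE_0(X)=\tfrac12\|X\|_1+\tfrac12\big|\bbe[X]\big| ,
\]
and by (Q3),
\[
\cV_0(X)=\cE_0(X)+\bbe[X]=\tfrac12\|X\|_1+\tfrac12|\bbe[X]|+\bbe[X].
\]
As a sanity check, (Q1) can be verified directly in this form. For any $C$, the triangle inequality gives $\tfrac12\|X-C\|_1\ge\tfrac12\|X-\bbe[X]\|_1-\tfrac12|\bbe[X]-C|$, so $\cE_0(X-C)\ge\tfrac12\|X-\bbe[X]\|_1=\cD_0(X)$. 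Equality holds at $C=\bbe[X]$, matching $\cS_0$.

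There is no genuine obstacle here. The only point needing care is the justification $\bbe[Z_+]=\bbe[Z_-]$, which relies on $\bbe[Z]=0$ and on integrability, ensured by the standing assumption $X\in\cL^1(\Omega)$.
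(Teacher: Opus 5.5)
Your proof is correct and takes essentially the same route as the paper: Diagram~5 is obtained from Diagram~4 through the identity $\bbe[(X-\bbe[X])_+]=\bbe[(X-\bbe[X])_-]=\frac{1}{2}\|X-\bbe[X]\|_1$ and the formula $\max\{a,b\}=\frac{a+b}{2}+\frac{|a-b|}{2}$ applied to $a=\bbe[X_-]$, $b=\bbe[X_+]$, while the quadrangle properties carry over from Proposition~\ref{Regularity of BMQ}. Your direct triangle-inequality check of (Q1) in the $\cL^1$ form is a sound extra that the paper does not include.
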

\medskip
$$\eqalign{
  \hskip40pt
  \cS_0 (X) = \bbe[X]
&\cr
  \hskip40pt
  \cR_0 (X)  =  \frac{1}{2}\|X-\bbe[X]\|_1 + \bbe[X]
&\cr
  \hskip40pt
  \cD_0 (X) =  \frac{1}{2}\|X-\bbe[X]\|_1
  &\cr
  \hskip40pt
  \cV_0  (X) = \frac{1}{2}\|X\|_1 + \frac{1}{2}|\bbe[X]| + \bbe[X] 
  &\cr
  \hskip40pt
  \cE_0  (X) = \frac{1}{2}\|X\|_1 + \frac{1}{2}|\bbe[X]|}
$$
\smallskip

\centerline{{\bf Diagram~5 : \quad Mean Quadrangle ($\cL^1$ norm version)} \label{diagram5}}

\medskip 
\begin{Cor}[Regularity of the Mean Quadrangle]\label{cor:reg+coherence}The quartet $\left(\cR_0, \cD_0, \cV_0, \cE_0 \right)$ is a regular (moreover a coherent) quadrangle with statistic $\cS_0$.  
\end{Cor}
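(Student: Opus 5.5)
Since Proposition~\ref{Regularity of BMQ} already shows that $(\cR_0,\cD_0,\cV_0,\cE_0)$ satisfies (Q1)--(Q4) with statistic $\cS_0(X)=\bbe[X]$, two things remain. First, the non-strict inequalities in the axioms must be upgraded to strict ones. Second, $\cR_0$ must be shown to satisfy (C0)--(C2). By centerness (Q3) the strict axioms reduce to two facts: $\cE_0(X)>0$ for $X\neq 0$, and $\cD_0(X)>0$ for non-constant $X$. Then (V2) and (R2) follow immediately. Closedness and convexity are inherited: $\cE_0$ is a maximum of two continuous convex functionals on $\cL^1$, and $\cD_0(X)=\frac12\|X-\bbe[X]\|_1$ is a seminorm composed with a continuous linear map.

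For strictness of the error, I use the form from Corollary~\ref{cor mean-abs dev}, $\cE_0(X)=\frac12\|X\|_1+\frac12|\bbe[X]|\ge \frac12\|X\|_1$. This is strictly positive whenever $X\neq0$ (a.s.). Equivalently, in the proof of Proposition~\ref{Regularity of BMQ} the two degenerate cases required $x\neq0$, so they cannot occur here. For the deviation, $\cD_0(X)=\frac12\bbe|X-\bbe[X]|=0$ forces $X=\bbe[X]$ a.s., i.e.\ $X$ is constant. (D1) and (R1) are immediate: $\cD_0(C)=0$ and $\cR_0(C)=C$.

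For coherence, I check each property of $\cR_0(X)=\bbe[X]+\bbe[X-\bbe[X]]_+$ in turn. (C0) is (R1). (C1) holds because for $\lambda>0$ both terms are positively homogeneous, since $(\lambda t)_+=\lambda t_+$. The real step is monotonicity (C2). I write
\[
\cR_0(X)=\bbe\big[X+(X-\bbe[X])_+\big]=\bbe\big[\max\{X,\,2X-\bbe[X]\}\big],
\]
which does not expose monotonicity directly. So instead I consider $Z=Y-X\ge0$ and show $\cR_0(X+Z)\ge\cR_0(X)$. Sublinearity of $t\mapsto t_+$ gives
\[
(X-\bbe[X])_+\le (X+Z-\bbe[X+Z])_+ + (\bbe[Z]-Z)_+ .
\]
Taking expectations,
\[
\cR_0(X)\le \bbe[X]+\bbe\big[(Y-\bbe[Y])_+\big]+\bbe[(Z-\bbe Z)_-],
\]
and $\bbe[(Z-\bbe Z)_-]=\bbe[(Z-\bbe Z)_+]\le\bbe[Z]$, because $(Z-\bbe Z)_+\le Z$ for $Z\ge0$. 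Since $\bbe[X]+\bbe[Z]=\bbe[Y]$, this yields $\cR_0(X)\le\cR_0(Y)$.

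I expect this monotonicity inequality to be the only real obstacle. It is exactly the known fact that the mean-upper-semideviation of order 1 with coefficient $1$ is monotone, and I would cite \cite{stochproglec} as a backup. The remaining steps are routine.
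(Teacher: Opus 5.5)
Your proof is correct. For regularity you follow the paper's route. The paper only notes that $\cE_0(X)=\max\{\bbe[X_+],\bbe[X_-]\}>0$ for $X\neq 0$, plus closedness and convexity, and lets Proposition~\ref{Regularity of BMQ} and (Q3) supply the rest. Your separate check of $\cD_0$ is harmless but not needed: the minimum in (Q1) is attained at $C=\bbe[X]$, so $\cD_0(X)=\cE_0(X-\bbe[X])>0$ for non-constant $X$.

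The real difference is coherence. The paper does not prove it and instead cites \cite{stochproglec}. Your decomposition $X-\bbe[X]=(Y-\bbe[Y])+(\bbe[Z]-Z)$, combined with subadditivity of $t\mapsto t_+$, is a valid self-contained proof of (C2). It can be shortened, since $Z\ge 0$ gives $(\bbe[Z]-Z)_+\le\bbe[Z]$ directly.

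A third route is available within the paper itself: the Kusuoka representation \eqref{Kusuoka of sem risk}. Each term $(1-\alpha)\cR_\alpha(X)+\alpha\bbe[X]$ is monotone, positively homogeneous, and equal to $C$ on constants. These properties pass to the pointwise maximum over $\alpha$, so (C0)--(C2) follow at once. That route is shorter but rests on the dual CVaR formula (Theorem~\ref{dual cvar th}), whereas your argument is fully elementary.
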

\begin{proof}
It is obvious that $\cE_0(X)= \max\{\bbe[X_+], \bbe[X_-] \}>0, \ \forall \; X \neq 0$ and that it is convex and closed functional. Thus $\cE_0(X)$ is a regular error. The rest follows from Proposition \ref{Regularity of BMQ}.  

The proof of coherence of $\cR_0$ can be found in \citep{stochproglec}.
\end{proof}
The biased mean quadrangle has a profound relationship with the quantile quadrangle, demonstrated in Proposition~\ref{prop 2}. 

\begin{Prop}[Relation between Quantile and Biased Mean Quadrangles]\label{prop 2}
Let $\left(\cR_\alpha, \cD_\alpha, \cV_\alpha, \cE_\alpha \right)$ be a quantile quadrangle for $\alpha \in (0,1).$ Then for any $x\in \bbr$ 
\begin{align}
  \cR_x (X) &= \displaystyle \max_{\alpha \in [0,1]}\left \{(1-\alpha)(\cR_\alpha(X)-x_+)+ \alpha( \bbe[X]-x_-)\right \},\label{extended kusuoka}\\
  \cD_x (X) &= \displaystyle \max_{\alpha \in [0,1]}\left \{(1-\alpha)(\cD_\alpha(X)-x_+)- \alpha x_-\right \} ,\label{kusuoka dev}\\
  \cV_x  (X) &= \displaystyle \max_{\alpha \in [0,1]}\left \{(1-\alpha)(\cV_\alpha(X)-x_+) + \alpha( \bbe[X]-x_-)\right \},\\
  \cE_x  (X) &= \displaystyle \max_{\alpha \in [0,1]}\left \{(1-\alpha)(\cE_\alpha(X)-x_+)-\alpha x_-\right \}.\label{kusuoka error}
\end{align}

% Moreover, $\forall x\in (\inf (X), \sup (X)) \  \exists \alpha^* \in [\bbp (X < \cS_x(X)), \bbp (X \leq \cS_x(X))]:$
%\begin{equation*}
    %\min_X \left\{\cD_x(X)\right\} = \min_X \left\{  (1-\alpha^*)(\cD_{\alpha^*}(X)-x) - x_-\right\}.
%\end{equation*}

\end{Prop}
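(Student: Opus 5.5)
The plan is to derive all four identities from the Dual CVaR Optimization Formula (Theorem~\ref{dual cvar th}) applied to the centered variable $X-\bbe[X]$. The deviation identity \eqref{kusuoka dev} is the core. The risk identity follows from (Q3), the error identity from the regret/error link, and the regret identity from adding $\bbe[X]$ to the error identity.

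\emph{Step 1 (deviation).} By Diagram~3, $\cD_x(X)=E_{X-\bbe[X]}(x)-x_+$. Theorem~\ref{dual cvar th} with $X$ replaced by $X-\bbe[X]$, together with translation equivariance $\cvar_\alpha(X-\bbe[X])=\cvar_\alpha(X)-\bbe[X]=\cD_\alpha(X)$, gives
\begin{equation*}
E_{X-\bbe[X]}(x)=\max_{\alpha\in[0,1]}\{\alpha x+(1-\alpha)\cD_\alpha(X)\}.
\end{equation*}
Here $\cD_\alpha$ is understood at the endpoints via $\cvar_0=\bbe$ and $\cvar_1=\ess$. We then write $x=x_+-x_-$ and subtract $x_+=\alpha x_+ +(1-\alpha)x_+$ inside the maximum. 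This turns the objective into $(1-\alpha)(\cD_\alpha(X)-x_+)-\alpha x_-$, which is \eqref{kusuoka dev}.

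\emph{Step 2 (risk).} By (Q3), $\cR_x(X)=\cD_x(X)+\bbe[X]$. We write $\bbe[X]=(1-\alpha)\bbe[X]+\alpha\bbe[X]$ inside the maximum and use $\cR_\alpha=\cD_\alpha+\bbe[X]$. This yields \eqref{extended kusuoka} immediately.

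\emph{Step 3 (error), the main obstacle.} There is no dual formula directly for $\max\{\bbe[X_-]-x_+,\bbe[X_+]-x_-\}$, so I would compute the right-hand side of \eqref{kusuoka error} directly. We have $\cE_\alpha(X)=\frac{\alpha}{1-\alpha}\bbe[X_+]+\bbe[X_-]$. Substituting this, the bracket becomes
\begin{equation*}
\alpha\,\bbe[X_+]+(1-\alpha)\bbe[X_-]-(1-\alpha)x_+-\alpha x_-=(1-\alpha)(\bbe[X_-]-x_+)+\alpha(\bbe[X_+]-x_-),
\end{equation*}
which is affine in $\alpha$. A maximum of an affine function over $[0,1]$ is attained at an endpoint, giving exactly $\max\{\bbe[X_-]-x_+,\bbe[X_+]-x_-\}=\cE_x(X)$.

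The care point is the endpoint $\alpha=1$, where $\frac{1}{1-\alpha}$ blows up. Here $(1-\alpha)\cE_\alpha(X)=\alpha\bbe[X_+]+(1-\alpha)\bbe[X_-]$ must be read by continuous extension, just as $\cvar_1$ is defined by a limit. I would state this convention explicitly. Alternatively, one can take the supremum over $[0,1)$ and note that it equals the value at the limit.

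\emph{Step 4 (regret).} Using $\cV_\alpha=\cE_\alpha+\bbe[X]$ and $\cV_x=\cE_x+\bbe[X]$, the same splitting of $\bbe[X]$ as in Step 2 turns the error identity into the regret identity. The $\alpha=1$ term again needs the limiting convention.

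As a consistency check, the minimizers in Step 1 are $\alpha\in[\bbp(X<\cS_x(X)),\bbp(X\le \cS_x(X))]$ by Theorem~\ref{dual cvar th}. This is not needed for the statement itself.
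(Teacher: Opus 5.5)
Your proposal is correct and follows essentially the same route as the paper: the deviation identity comes from Theorem~\ref{dual cvar th} applied to $X-\bbe[X]$, the error identity comes from substituting the Koenker--Bassett error and maximizing a function affine in $\alpha$ over $[0,1]$, and the risk and regret identities then follow from (Q3). Your explicit bookkeeping fills in details the paper leaves implicit: splitting $x=x_+-x_-$, using translation equivariance of CVaR, and reading the $\alpha=1$ endpoint by continuous extension.
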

\begin{proof}
It is sufficient to prove equalities \eqref{kusuoka dev} and \eqref{kusuoka error} since the expressions for $\cR_x(X)$ and $\cV_x(X)$ follow from (Q3). Indeed, 
\begin{equation*}
    \begin{split}
      & \;\;\;\;\max_{\alpha \in [0,1]}\left \{(1-\alpha)(\cE_\alpha(X)-x_+) -\alpha x_-\right \} \\
        & = \max_{\alpha \in [0,1]}\left \{(1-\alpha)\left(\bbe \left[ \frac{\alpha}{1-\alpha} X_+ +X_-\right]-x_+\right)-\alpha x_-\right \}\\
        & = \max_{\alpha \in [0,1]}\left \{\left(\bbe \left[ \alpha X_+ +(1-\alpha)X_-\right] -(1-\alpha)x_+-\alpha x_ -\right)\right \}\\
        & = \max \{\bbe[X_-] - x_+, \bbe[X_+] - x_-\}=  \cE_x  (X) \;.
    \end{split}
\end{equation*}
To demonstrate the equality \eqref{kusuoka dev}, we use the definition of $\cD_x(X),$ i.e.,
$$\cD_x(X) = E_{X-\bbe[X]}(x) - x_+
$$
and apply the Theorem \ref{dual cvar th} to a random variable $X-\bbe[X].$ 
\end{proof}

% \cite{ShapiroKusuoka} refers to the following functional as the mean-upper-semideviation 
% $$
% \displaystyle \max_{\alpha \in [0,1]}\left \{(1-\alpha)\cR_\alpha(X) + \alpha\bbe[X]\right\}\;.
% $$
 % We refer to it as the {\it mean-upper-semirisk} since it satisfies risk axioms; see the following corollary.

\begin{Cor}[Kusuoka Representation of Mean Absolute Risk] By setting $x=0$ in \eqref{extended kusuoka}, we obtain the (minimal) Kusuoka representation (see \cite{ShapiroKusuoka}) of the mean-absolute risk, i.e.,
\begin{equation}\label{Kusuoka of sem risk}
    \cR_0(X) = \displaystyle \max_{\alpha \in [0,1]}\left \{(1-\alpha)\cR_\alpha(X) + \alpha\bbe[X]\right \}.
\end{equation}
\end{Cor}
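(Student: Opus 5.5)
The plan is to specialize equality \eqref{extended kusuoka} of Proposition~\ref{prop 2} to $x=0$, and then to justify that the resulting representation is the (minimal) Kusuoka representation in the sense of \cite{ShapiroKusuoka}.

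For the identity itself, I would set $x=0$, so that $x_+=x_-=0$. Then \eqref{extended kusuoka} reads $\cR_0(X)=\max_{\alpha\in[0,1]}\{(1-\alpha)\cR_\alpha(X)+\alpha\bbe[X]\}$, which is exactly \eqref{Kusuoka of sem risk}. The endpoints need a remark, since $\cR_\alpha$ was introduced only for $\alpha\in(0,1)$. I would use the conventions of Definition~\ref{cvar}. At $\alpha=0$ the term is $\cvar_0(X)=\bbe[X]$. At $\alpha=1$ the factor $(1-\alpha)$ kills the CVaR term, and $(1-\alpha)\cvar_\alpha(X)\to 0$ for $X\in\cL^1$ because $\int_\alpha^1 \var^-_\beta(X)\,d\beta\to0$. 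So the function maximized is well defined and continuous on $[0,1]$, and the maximum is attained.

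As a sanity check, I would also derive the same identity directly from Theorem~\ref{dual cvar th}. Apply it to $Y=X-\bbe[X]$ at the point $0$: $E_Y(0)=\bbe[Y_+]=\max_\alpha (1-\alpha)\cvar_\alpha(Y)$. Translation equivariance of CVaR gives $\cvar_\alpha(Y)=\cvar_\alpha(X)-\bbe[X]$. Adding $\bbe[X]$ then yields $\bbe[X-\bbe[X]]_++\bbe[X]=\max_\alpha\{(1-\alpha)\cvar_\alpha(X)+\alpha\bbe[X]\}$, matching Diagram~4.

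The real content is the word ``Kusuoka'', which requires a different form of the representation. A Kusuoka representation writes a law-invariant coherent risk measure as $\sup_{\mu\in\mathfrak M}\int_{[0,1]}\cvar_\alpha(X)\,\mu(d\alpha)$ over a set $\mathfrak M$ of probability measures on $[0,1]$. The representation above is of this form, with $\mathfrak M=\{(1-\alpha)\delta_\alpha+\alpha\delta_0:\alpha\in[0,1]\}$; this uses $\bbe[X]=\cvar_0(X)$, and each element is indeed a probability measure. Coherence itself is granted by Corollary~\ref{cor:reg+coherence}.

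I expect \emph{minimality} to be the main obstacle. For this I would not reprove anything: I would appeal to \cite{ShapiroKusuoka}, where the minimal Kusuoka representation of the mean-upper-semideviation of order $1$ is computed and coincides with this family. If a self-contained argument were wanted, the first thing I would try is the criterion that a representing set is minimal exactly when each of its measures is exposed. I would verify this via the maximizer set $[\bbp(X<\bbe[X]),\bbp(X\le\bbe[X])]$ from Theorem~\ref{dual cvar th}: as $X$ ranges over continuous distributions, this set is the singleton $\{F_X(\bbe[X])\}$, which can be any $\alpha\in(0,1)$. Hence every member of the family is the unique maximizer for some $X$, and no member can be removed.
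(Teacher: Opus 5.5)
Your proof is correct and is the paper's own argument: setting $x=0$ in \eqref{extended kusuoka} gives $x_+=x_-=0$ and hence the identity, with minimality taken from \cite{ShapiroKusuoka}. Your endpoint remark, the direct check via Theorem~\ref{dual cvar th}, and the Kusuoka set $\{(1-\alpha)\delta_\alpha+\alpha\delta_0:\alpha\in[0,1]\}$ are all sound additions. In the optional minimality sketch, two cautions apply. First, uniqueness of the maximizing $\alpha$ within your family is weaker than exposedness in the dual set $\{1+h-\bbe[h]: 0\le h\le 1\}$; exposedness does hold here, because for $\bbp(X=\bbe[X])=0$ the density $F_X(\bbe[X])+1_{\{X>\bbe[X]\}}$ is the unique maximizer. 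Second, the endpoint member $\delta_0$ is never a maximizer for nonconstant $X$, so it can be dropped and enters a minimal set only through closure.
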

Interestingly, it is possible to construct the mean quadrangle (mean-upper-semideviation version) starting from $\cR_0$ (note that it is generally easier to construct a quadrangle starting from either a regret or an error).
Indeed, application of Theorem \ref{CVaR opt th} to \eqref{Kusuoka of sem risk} yields
\begin{align*}
  \cR_0(X) &=  \max_{\alpha \in [0,1]} \min_{C \in \bbr}\left \{(1-\alpha)\left(C + \frac{1}{1-\alpha}\bbe[X-C]_+ \right)+ \alpha\bbe[X]\right \}\\
  &= \max_{\alpha \in [0,1]} \min_{C \in \bbr}\left \{(1-\alpha)C + \bbe[X-C]_+ + \alpha\bbe[X]\right \}\\
  &=\min_{C \in \bbr}\max_{\alpha \in [0,1]} \left \{ \bbe[X] + \bbe[X-C]_- + \alpha(\bbe[X-C]_+ - \bbe[X-C]_-)\right \}\\
  &=\min_{C \in \bbr} \ C + \max\{\bbe[X-C]_+, \bbe[X-C]_-\} + \bbe[X-C]\\
  &= \min_{C \in \bbr} \  C + \cV_0(X-C),
\end{align*}
where $\cS_0(X) = \bbe[X] = \argmin\limits_{C \in \bbr} \  C + \cV_0(X-C).$
For a more detailed discussion on $\cR_0,$ its Kusuoka representation, and statistical properties, see \cite[Chapter 6]{stochproglec}.

 The following Proposition \ref{prop:equivalence} establishes the equivalence between the CVaR deviation and the superexpectation deviation minimization. 

\begin{Prop}[Equivalence of the  CVaR and Superexpectation  Deviation Minimization]\label{prop:equivalence}
Let $\cX \subseteq \cL^1(\Omega)$ be a nonempty closed, convex, and bounded feasible set for the following optimization problems: 
\begin{equation}\label{x dev min}
    \min_{X \in \cX} \quad \cD_x(X),
\end{equation}
\begin{equation}\label{al dev min}
    \min_{X \in \cX} \quad \cD_\alpha(X).
\end{equation}
Additionally, assume that the optimal solution in \eqref{x dev min} and \eqref{al dev min} exists. 
Then $\forall \; x \in \bbr \; \exists \; \alpha \in [0,1]$ such that
\begin{equation}\label{eq div min}
    \argmin_{X \in \cX}  \  \cD_x(X)  =  \argmin_{X \in \cX}  \ \cD_{\alpha}(X). 
\end{equation}
Moreover, for $X^* \in  \argmin\limits_{X \in \cX} \ \cD_x(X), \ \alpha \in [\bbp(X^* < x + \bbe[X^*]), \ \bbp(X^* \leq x + \bbe[X^*])]$ realizes the equality \eqref{eq div min}.

\end{Prop}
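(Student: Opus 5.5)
The plan is to exploit the max-representation \eqref{kusuoka dev} from Proposition~\ref{prop 2} and turn \eqref{x dev min} into a convex--concave min--max problem. Put
$$
\varphi(X,\alpha) := (1-\alpha)\bigl(\cD_\alpha(X)-x_+\bigr)-\alpha x_- ,\qquad X\in\cX,\ \alpha\in[0,1].
$$
By Proposition~\ref{prop 2} we have $\cD_x(X)=\max_{\alpha\in[0,1]}\varphi(X,\alpha)$. Problem \eqref{x dev min} is therefore $\min_{X\in\cX}\max_{\alpha\in[0,1]}\varphi(X,\alpha)$.

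\emph{Step 1 (convexity in $X$).} For each fixed $\alpha$, $\varphi(\cdot,\alpha)$ is convex and closed, because $\cD_\alpha$ is convex and closed and is multiplied by $1-\alpha\ge 0$.

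\emph{Step 2 (concavity in $\alpha$).} Write $(1-\alpha)\cD_\alpha(X)=(1-\alpha)\cvar_\alpha(X-\bbe[X])=\int_\alpha^1\var^-_\beta(X-\bbe[X])\,d\beta$, using Definition~\ref{cvar}. Its derivative in $\alpha$ is $-\var^-_\alpha(X-\bbe[X])$, which is nonincreasing, so this term is concave. The remaining terms are affine in $\alpha$. It also helps that $\varphi(X,\cdot)$ is continuous on the compact set $[0,1]$.

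\emph{Step 3 (saddle point).} With $\cX$ convex and $[0,1]$ compact, Sion's minimax theorem gives
$$
\min_X\max_\alpha\varphi=\max_\alpha\min_X\varphi.
$$
Together with the assumed existence of a minimizer $X^*$ of \eqref{x dev min}, this yields a saddle point $(X^*,\alpha^*)$. The $\alpha$-component is any maximizer of $\varphi(X^*,\cdot)$. By Theorem~\ref{dual cvar th} applied to $X^*-\bbe[X^*]$, this maximizer set is $[\bbp(X^*-\bbe[X^*]<x),\ \bbp(X^*-\bbe[X^*]\le x)]$, which is exactly the interval in the statement.

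\emph{Step 4 (reading off the equivalence).} For a saddle point, $X^*\in\argmin_X\varphi(X,\alpha^*)$. Since $\varphi(X,\alpha^*)=(1-\alpha^*)\cD_{\alpha^*}(X)$ plus a constant, for $\alpha^*<1$ this gives $X^*\in\argmin_{\cX}\cD_{\alpha^*}$. Standard saddle-point theory also says that, for a fixed saddle value $\alpha^*$, every minimizer of $\max_\alpha\varphi$ pairs with $\alpha^*$ as a saddle point. This yields $\argmin\cD_x\subseteq\argmin\cD_{\alpha^*}$.

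The main obstacle is the reverse inclusion. A minimizer $\tilde X$ of $\varphi(\cdot,\alpha^*)$ need not minimize $\max_\alpha\varphi$ in general; one needs $\alpha^*$ to be a maximizer of $\varphi(\tilde X,\cdot)$ as well. To handle this, I would use the value equality
$$
\varphi(\tilde X,\alpha^*)=\varphi(X^*,\alpha^*)=\min\cD_x ,
$$
and then try to show $\alpha^*\in[\bbp(\tilde X<x+\bbe\tilde X),\bbp(\tilde X\le x+\bbe\tilde X)]$ via the first-order characterization in Theorem~\ref{dual cvar th}. If that fails, I would interpret \eqref{eq div min} as saying that the optimal solutions are shared, i.e. that $X^*$ solves both problems.

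The degenerate cases also need separate treatment: $\alpha^*=1$, where $\varphi(\cdot,1)\equiv -x_-$ carries no information, and the endpoint $\alpha^*=0$, where $\cD_0\equiv0$. I expect that for $x$ outside the range of $X^*-\bbe X^*$ the choice of $\alpha$ at an endpoint makes the statement hold trivially or requires choosing a different $\alpha$ in the interval.
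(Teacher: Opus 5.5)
Your route is the same as the paper's: the max-representation of Proposition~\ref{prop 2}, a saddle point of $\varphi$ on $\cX\times[0,1]$, and Theorem~\ref{dual cvar th} to locate $\alpha^*$. Steps 1--2, the appeal to Sion, and the forward inclusion $\argmin_{\cX}\cD_x\subseteq\argmin_{\cX}\cD_{\alpha^*}$ in Step 4 (for a saddle value $\alpha^*<1$) are correct, and better justified than in the paper.

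There are, however, two genuine problems. First, in Step 3 you identify the saddle $\alpha$'s with all maximizers of $\varphi(X^*,\cdot)$. The saddle set is a product $A\times B$ with $B=\argmax_{\alpha}\min_{X\in\cX}\varphi(X,\alpha)$. Theorem~\ref{dual cvar th} only gives $B\subseteq[\bbp(X^*<x+\bbe X^*),\,\bbp(X^*\le x+\bbe X^*)]$, not equality. Second, and more seriously, the reverse inclusion you single out cannot be proved, because it is false. The paper's proof covers it with the bare assertion that $X^*\in\argmin\cD_{\alpha^*}$ makes $(X^*,\alpha^*)$ a saddle point. That is exactly the step you rightly declined to take: a minimizer of $\varphi(\cdot,\alpha^*)$ need not have $\alpha^*$ among its own maximizers.

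Here is a concrete counterexample. Take three equally likely states, $X_t=(1,t,-1-t)$, $\cX=\{X_t:0\le t\le 1\}$, and $x=\frac12$. Every $X_t$ has mean $0$, so $\cD_x(X_t)=\frac13\bigl(\frac12+(t-\frac12)_+\bigr)$ and $\argmin_{\cX}\cD_x=\{X_t:t\le\frac12\}$. On the other hand, $(1-\alpha)\cD_\alpha(X_t)=\int_\alpha^1\var^-_\beta(X_t)\,d\beta$ behaves as follows:
\begin{itemize}
\item it equals $\alpha(1+t)$ for $\alpha\in(0,\frac13]$;
\item it equals $(\frac23-\alpha)t+\frac13$ for $\alpha\in[\frac13,\frac23]$;
\item $\cD_\alpha(X_t)$ does not depend on $t$ for $\alpha=0$ and for $\alpha\ge\frac23$.
\end{itemize}
Hence, for every $\alpha\in[0,1]$, the CVaR-deviation argmin on $\cX$ is either $\{X_0\}$ or all of $\cX$. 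The claimed equality of argmin sets therefore fails for every $\alpha$.

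The example also shows the other failures:
\begin{itemize}
\item The only saddle value is $\alpha^*=\frac23$, since $\min_t\varphi(X_t,\alpha)$ equals $\frac32\alpha-\frac12$, $\frac{\alpha}{2}-\frac16$, $\frac{1-\alpha}{2}$ on the three ranges. Its argmin $\cX$ is strictly larger than $\argmin_{\cX}\cD_x$.
\item For the optimal point $X_{1/2}$ the interval is $[\frac13,\frac23]$. Yet for $\alpha=\frac13$ the CVaR-deviation argmin is $\{X_0\}$, which does not even contain $X_{1/2}$. This refutes your Step 3 identification and the ``any $\alpha$ in the interval'' reading of the last clause.
\end{itemize}

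Your endpoint worry is also real. If $x>\ess(X-\bbe X)$ on all of $\cX$, then $\cD_x\equiv 0$ and the interval collapses to $\{1\}$. The CVaR deviation with $\alpha=1$ then typically has a strictly smaller argmin than $\cX$.

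What is actually provable is your fallback, stated precisely: for a saddle value $\alpha^*\in[0,1)$ (not an arbitrary point of the interval), $\argmin_{\cX}\cD_x\subseteq\argmin_{\cX}\cD_{\alpha^*}$. The set equality and the ``moreover'' clause, as stated, do not hold.
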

\begin{proof}
By Proposition \ref{prop 2}
\begin{equation}\label{kusuoka dev in cor}
     \min_{X \in \cX} \quad \cD_x (X) =  \min_{X \in \cX}\displaystyle \max_{\alpha \in [0,1]} \quad \left \{(1-\alpha)(\cD_\alpha(X)-x_+)- \alpha x_-\right \}
\end{equation}
The left-hand side of \eqref{kusuoka dev in cor} is a convex optimization problem for which the existence of the optimal solution is assumed. 
The right-hand side of \eqref{kusuoka dev in cor} is a minimax (convex-concave) optimization problem. Since the existence of an optimal solution holds for the left-hand side of \eqref{kusuoka dev in cor}, then the function $(1-\alpha)(\cD_\alpha(X)-x_+)- \alpha x_-$ possesses a saddle point $( X^*,\alpha^*)$ on $ \cX \times [0,1].$ 
Let $X^* \in \argmin\limits_{X \in \cX} \ \cD_x(X)$  for some $x \in \bbr$. Then \eqref{kusuoka dev in cor} implies that 
$$\cD_x (X^*) = (1-\alpha^*)(\cD_{\! \alpha^*}(X^*)-x_+)- \alpha^* x_-,
$$
where $( X^*,\alpha^*)$ is a saddle point for the right-hand side of \eqref{kusuoka dev in cor}. Therefore, by the definition of a saddle point
$$
X^* \in \argmin\limits_{X \in \cX} \ \cD_{\! \alpha^*}(X).
$$

Moreover, if $X^* \in \argmin\limits_{X \in \cX} \ \cD_{\! \alpha^*}(X)$ for the same $\alpha^*,$ then $( X^*,\alpha^*)$ is a saddle point for the right-hand side of \eqref{kusuoka dev in cor} and consequently, $X^* \in \argmin\limits_{X \in \cX} \ \cD_x(X)$ for the same $x \in \bbr.$

Finally, Theorem \ref{dual cvar th} applied to a random variable $X^* - \bbe[X^*]$ implies that $\alpha^* \in [\bbp(X^* < x + \bbe[X^*]), \ \bbp(X^* \leq x + \bbe[X^*])],$ which completes the proof.
\end{proof}

\begin{rem}[Choice of the Functional Space]The choice of $\cL^1(\Omega)$ in Proposition~\ref{prop:equivalence} is natural; however, any $\cL^p(\Omega)$ with $p \ge 1$ may be used for convenience. For $p \in (1,\infty)$, the space $\cL^p(\Omega)$ is reflexive, and under the assumption that $\cD_x$ and $\cD_\alpha$ are coercive the existence of an optimal solution in~\eqref{eq div min} is guaranteed, see \cite[Chapter 7]{kurdila2005convex}.

\end{rem}
\begin{Cor}[Equivalence of SE Risk and CVaR Minimization with Expectation Constraint]\label{risk opt cor} Let $\cX \subseteq \cL^1(\Omega)$ be closed, convex, and bounded. Define the feasible set
\begin{equation*}
    \cX_\mu = \left\{X \in \cX: \bbe[X] = \mu \in \bbr \right\}.
\end{equation*}
Then   $\forall \; x\in \bbr \ \exists \; \alpha \in (0,1)$ such that
\begin{equation*}\label{risk argmins}
    \argmin_{X \in \cX_\mu} \ \cR_x(X) = \argmin_{X \in \cX_\mu} \ \cR_{\alpha}(X) \, .
\end{equation*}
\end{Cor}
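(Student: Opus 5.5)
The plan is to reduce the statement to Proposition~\ref{prop:equivalence} by using the centerness relation (Q3) in both quadrangles. On the feasible set $\cX_\mu$ the expectation is fixed, so for every $X \in \cX_\mu$ we have
\begin{equation*}
\cR_x(X) = \cD_x(X) + \bbe[X] = \cD_x(X) + \mu, \qquad \cR_\alpha(X) = \cD_\alpha(X) + \mu .
\end{equation*}
Both objectives therefore differ from the corresponding deviations by the same constant $\mu$. Hence
\begin{equation*}
\argmin_{X\in\cX_\mu} \cR_x(X) = \argmin_{X\in\cX_\mu} \cD_x(X), \qquad \argmin_{X\in\cX_\mu}\cR_\alpha(X) = \argmin_{X\in\cX_\mu}\cD_\alpha(X).
\end{equation*}

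The second step is to check that $\cX_\mu$ qualifies as a feasible set for Proposition~\ref{prop:equivalence}.
\begin{itemize}
\item Since $X\mapsto\bbe[X]$ is a continuous linear functional on $\cL^1(\Omega)$, the set $\{X:\bbe[X]=\mu\}$ is a closed affine subspace.
\item Intersecting it with the closed, convex, bounded set $\cX$ gives a closed, convex, bounded set.
\item We assume $\cX_\mu$ is nonempty, since otherwise the statement is vacuous.
\item As in Proposition~\ref{prop:equivalence}, we assume (or inherit from the standing hypotheses) that minimizers exist.
\end{itemize}
Applying Proposition~\ref{prop:equivalence} with $\cX$ replaced by $\cX_\mu$ then gives, for each $x$, some $\alpha\in[0,1]$ with equal argmin sets. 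For $X^*$ an optimal point, this $\alpha$ lies in $[\bbp(X^*<x+\mu),\bbp(X^*\le x+\mu)]$.

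The main obstacle is that the corollary asserts $\alpha\in(0,1)$ rather than $[0,1]$, so the endpoints must be excluded or handled separately.
\begin{itemize}
\item \emph{Case $\alpha=0$.} Here $\cR_\alpha=\cvar_0=\bbe$, which is constant on $\cX_\mu$, so every feasible point is optimal. The saddle point condition then forces $\cD_x$ to be constant on $\cX_\mu$ as well. In this degenerate case I would argue that the whole interval of admissible $\alpha$ from the dual CVaR formula (Theorem~\ref{dual cvar th}) can be used. If that interval contains an interior point, we take it.
\item \emph{Case $\alpha=1$.} This happens when $\bbp(X^*<x+\mu)=1$ with a degenerate interval. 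Here I would use the same interval argument or note that $\cvar_1=\ess$ is the limit of the interior case.
\end{itemize}
My first attempt would be to show that for $x$ in the range where $\cD_x$ is nontrivial, namely $x$ strictly between $\esi(X^*-\mu)$ and $\ess(X^*-\mu)$, the maximizer interval $[\bbp(X^*<x+\mu),\bbp(X^*\le x+\mu)]$ meets $(0,1)$. Outside that range I would treat the degenerate cases directly, noting that the argmin sets coincide with all of $\cX_\mu$ or with a limiting CVaR argmin. If this endpoint analysis fails, the fallback is to state the result with $\alpha\in[0,1]$, which follows immediately from the reduction above.
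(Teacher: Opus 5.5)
Your reduction is exactly the paper's proof. The paper's whole argument is the remark that $\cD=\cR-\bbe[X]$ within each quadrangle, so on $\cX_\mu$ the risk argmins equal the deviation argmins, followed by an appeal to Proposition~\ref{prop:equivalence}. The paper never reconciles the $\alpha\in[0,1]$ delivered by that proposition with the $\alpha\in(0,1)$ claimed here. Your endpoint analysis is therefore the only new ingredient, and it does not go through.

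The step ``the saddle point condition then forces $\cD_x$ to be constant on $\cX_\mu$'' is false. Because $\cvar_0(X)-\bbe[X]\equiv 0$, at $\alpha=0$ the saddle function $(1-\alpha)(\cD_\alpha(X)-x_+)-\alpha x_-$ equals $-x_+$ for every $X$. A saddle point $(X^*,0)$ therefore only yields $\min_{\cX_\mu}\cD_x=-x_+$. At $\alpha=1$ the factor $1-\alpha$ likewise erases the dependence on $X$. At either endpoint, ``$X^*$ minimizes the saddle function in $X$'' carries no information about $\argmin\cD_{\alpha^*}$.

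More decisively, no patch exists, because the statement is false in these cases. Take $\Omega=\{\omega_1,\omega_2\}$ with probabilities $\tfrac12,\tfrac12$, $\cX=[-1,1]^2$ and $\mu=0$, so $\cX_\mu=\{(t,-t):|t|\le 1\}$. A direct computation gives $\cR_x((t,-t))=\tfrac12(|t|-|x|)_+$, while $\cvar_\alpha((t,-t))=c_\alpha|t|$ with $c_\alpha>0$ for $\alpha\in(0,1]$.
\begin{itemize}
\item For $|x|\ge 1$, $\cR_x$ is constant on $\cX_\mu$, whereas every $\cvar_\alpha$ with $\alpha\in(0,1)$ is uniquely minimized at $t=0$. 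Only $\alpha=0$ works.
\item For $0<|x|<1$, $\argmin\cR_x=\{|t|\le|x|\}$ is the argmin of $\cR_\alpha$ for no $\alpha\in[0,1]$ at all. The unique dual multiplier is $\alpha^*=0$ when $x<0$ (although $\cD_x$ is not constant) and $\alpha^*=1$ when $x>0$.
\end{itemize}
Hence your fallback ``$\alpha\in[0,1]$ follows immediately'' is only as sound as Proposition~\ref{prop:equivalence}, which fails on this same example.

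What the saddle-point argument genuinely proves is your nondegenerate case, and only as an inclusion. Suppose some minimizer $X^*$ satisfies $\esi(X^*-\mu)<x<\ess(X^*-\mu)$. Then every dual-optimal $\alpha^*$ maximizes the saddle function at $X^*$, and so, by Theorem~\ref{dual cvar th}, it lies in $[\bbp(X^*<x+\mu),\bbp(X^*\le x+\mu)]\subset(0,1)$. Consequently $1-\alpha^*>0$ and $\argmin_{\cX_\mu}\cR_x\subseteq\argmin_{\cX_\mu}\cR_{\alpha^*}$. This is the form in which Theorem~\ref{Th: equivalence of qr and bmr} is stated.

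The reverse inclusion would need $\alpha^*$ to maximize the saddle function at every minimizer of $\cD_{\alpha^*}$, and this can fail even in the nondegenerate regime. Take four equally likely states, $\cX_\mu=\{(2-s,s,-1,-1):s\in[0,1]\}$ and $x=\tfrac12$. Then $\argmin\cR_x=\{s\ge\tfrac12\}$, the minimizer at $s=\tfrac34$ is nondegenerate, and $\alpha^*=\tfrac12$. Yet $\cvar_{1/2}$ is constant on $\cX_\mu$, and every $\cvar_\alpha$ with $\alpha\in[0,1]$ has argmin either all of $\cX_\mu$ or $\{s=1\}$.

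So the honest output of your reduction is this inclusion, with an interior $\alpha^*$, under your nondegeneracy condition. It is not the equality of argmin sets with $\alpha\in(0,1)$ that the corollary asserts.
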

\begin{proof}
Since $\cD(X) = \cR(X)-\bbe[X]$ for any (sub)regular $\cR$ and $\cD$ from the same quadrangle (see Definition~\ref{risk quadrangle}), then the corollary follows from Proposition~\ref{prop:equivalence}.      
\end{proof}
The following Example \ref{portfolio opt example} demonstrates how Corollary \ref{risk opt cor} can be applied to the portfolio optimization problem.    
\begin{Ex}[Equivalence of Portfolio Optimization]\label{portfolio opt example} Let $\cX \subseteq \cL^1(\Omega)$ be closed, convex, and bounded. Define the feasible set 
\begin{equation*}
    \cX_{\mathbf{w},\mu} = \left\{X \in \cX: X = -(w_1X_1 +\ldots + w_nX_n)\,, \quad \sum_{i=1}^n w_i = 1, \quad \bbe[-X] = \mu \right\},
\end{equation*}
where $w_i \in \bbr$ are portfolio weights, $X_i \in \cL^1(\Omega)$ are random returns of assets, $i=1,\ldots,n$, constituting a portfolio, and $\mu>0$ is a target expected return of the portfolio. Then  for every $x \in \bbr $ there exists $\alpha \in (0,1)$ such that
    $$\argmin\limits_{X \in \cX_{\mathbf{w},\mu}} \ \cR_x(X) =\argmin\limits_{X \in \cX_{\mathbf{w},\mu}} \ \cR_{\alpha}(X) \, .$$

\end{Ex}

Proposition~\ref{prop:equivalence} also plays a crucial role in statistical estimation via generalized regression, \citep{Quadrangle}. The following section discusses this in more detail.

\section{Generalized Regression within the Biased Mean Quadrangle}\label{sec: Gen Regression}

This section formulates and studies generalized regression problems in the framework of the biased mean quadrangle.

Let $Y \in \cL^p(\Omega)$ be a target random variable, i.e., a \emph{regressant} and $\mathbf{X} = (X_1,\ldots, X_d)^\top$ be a random vector of \emph{regressors} such that $X_i \in \cL^p(\Omega), \ i = 1,\ldots, d.$ In RQ framework, the \emph{generalized regression} problem is to find a function $f: \bbr^d \to \bbr,$ belonging to a class of measurable functions $\cC,$ that minimizes the \emph{regression residual} $Z_f:= Y - f(\mathbf{X})$ with respect to a particular error $\cE$ (e.g., mean square error, mean absolute error). Specifically, the goal is to solve the following stochastic optimization problem:
\begin{equation}\label{regression problem}
    \min_{f \in \cC} \quad \cE(Z_f).
\end{equation}
Fix the space of random variables to be $\cL^1(\Omega)$. By setting the superexpectation error $\cE_x, \ x \in \bbr$ as the objective in \eqref{regression problem}, we obtain the biased mean regression (BMR) 
\begin{equation}\label{bm regression problem}
    \min_{f \in \cC} \quad \cE_x(Z_f),
\end{equation}
and by setting the KB error (see Diagram 2) $\cE_\alpha, \ \alpha \in (0,1)$  as the objective in \eqref{regression problem}, we obtain the quantile regression (QR)
\begin{equation}\label{q regression problem}
    \min_{f \in \cC} \quad \cE_\alpha(Z_f).
\end{equation}
According to the \cite[Regression Theorem]{Quadrangle}, problem \eqref{bm regression problem} is equivalent to the following constrained optimization problem 
\begin{align}
    \min_{f \in \cC} &\quad \cD_x(Z_f)\label{bm regression opt}\\
    \textrm{s.t.} &\quad 0 = \cS_x(Z_f),\label{bm regression constr}
\end{align}
and analogously,   problem \eqref{q regression problem} is equivalent to
\begin{align}
    \min_{f \in \cC} &\quad \cD_\alpha(Z_f)\label{q regression opt}\\
    \textrm{s.t.} &\quad 0 \in \cS_\alpha(Z_f).\label{q regression constr}
\end{align}
Since \eqref{bm regression opt}--\eqref{bm regression constr} is a constrained deviation minimization problem, it is related to the quantile regression problem \eqref{q regression opt}--\eqref{q regression constr} via Proposition \ref{prop:equivalence}.
\begin{Th}[Equivalence of QR and BMR]\label{Th: equivalence of qr and bmr}
    Consider the regression problem \eqref{regression problem}, where the regression residual $ Z_f \in \cL^1(\Omega), \ \forall f \in \cC.$ For $x \in \bbr$  denote by $\mathfrak{S}_{se}$  a set of optimal solutions to the regression problem \eqref{bm regression problem}. Let $f^* \in \mathfrak{S}_{se},$ and denote by $\mathfrak{S}_{kb}$ a set of optimal solutions to the regression problem \eqref{q regression problem} with the KB error, where $$\alpha^* \in \mathcal{I}(Z_{f^*}) := [\bbp(Z_{f^*} <0), \ \bbp(Z_{f^*} \leq 0)] \, .$$ Then 
  \begin{equation}
      \mathfrak{S}_{se} \subseteq \mathfrak{S}_{kb}. 
  \end{equation}
\end{Th}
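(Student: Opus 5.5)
The plan is to route everything through deviations, where Proposition~\ref{prop:equivalence} already gives the link between $\cD_x$ and $\cD_\alpha$. I will assume what the Regression Theorem of \cite{Quadrangle} implicitly requires: $\cC$ is convex, closed under adding constants, and $f\mapsto Z_f$ is affine (e.g.\ affine regression). I will also assume the residual set $\{Z_f: f\in\cC\}$ meets the closedness and boundedness requirements of Proposition~\ref{prop:equivalence}, or that an optimal solution exists, which is assumed anyway.

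\emph{Step 1.} I will show that $f^*$ minimizes $\cD_x(Z_f)$ over all of $\cC$ and that $\bbe[Z_{f^*}]=-x$. By (Q1), $\cE_x(Z_f-C)\ge \cD_x(Z_f)$ for every constant $C$. Since $\cD_x$ is shift invariant and $\cC$ absorbs constants, we get $\min_f \cE_x(Z_f)=\min_f\cD_x(Z_f)$. Hence $f^*$ is a minimizer of $\cD_x(Z_f)$. Moreover the intercept of $f^*$ must be optimal in the error projection, so $0=\cS_x(Z_{f^*})$, i.e.\ $\bbe[Z_{f^*}]=-x$.

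\emph{Step 2.} I will identify $\alpha^*$. Applying Theorem~\ref{dual cvar th} to $Z_{f^*}-\bbe[Z_{f^*}]$ at the point $x$ gives the maximizers in \eqref{kusuoka dev}: they form the interval $[\bbp(Z_{f^*}<x+\bbe Z_{f^*}),\bbp(Z_{f^*}\le x+\bbe Z_{f^*})]$. Because $x+\bbe[Z_{f^*}]=0$, this interval is exactly $\mathcal{I}(Z_{f^*})$. This is precisely the range of $\alpha$ singled out in Proposition~\ref{prop:equivalence} for $X^*=Z_{f^*}$.

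\emph{Step 3.} I will apply Proposition~\ref{prop:equivalence} with feasible set $\cX=\{Z_f:f\in\cC\}$ (convex by the affine assumption). This gives $Z_{f^*}\in\argmin_{f}\cD_{\alpha^*}(Z_f)$ for $\alpha^*\in\mathcal{I}(Z_{f^*})$.

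\emph{Step 4.} I will pass back from the deviation $\cD_{\alpha^*}$ to the KB error $\cE_{\alpha^*}$. For every $f$, $\cE_{\alpha^*}(Z_f)\ge\cD_{\alpha^*}(Z_f)\ge \cD_{\alpha^*}(Z_{f^*})$. Since $\alpha^*\in[\bbp(Z_{f^*}<0),\bbp(Z_{f^*}\le0)]$, we have $0\in\var_{\alpha^*}(Z_{f^*})=\cS_{\alpha^*}(Z_{f^*})$. Theorem~\ref{CVaR opt th} (equivalently (Q1)/(Q4) of the quantile quadrangle) then gives $\cE_{\alpha^*}(Z_{f^*})=\cD_{\alpha^*}(Z_{f^*})$. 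Hence $f^*$ minimizes $\cE_{\alpha^*}(Z_f)$, i.e.\ $f^*\in\mathfrak{S}_{kb}$. Since $f^*\in\mathfrak{S}_{se}$ was arbitrary, this yields $\mathfrak{S}_{se}\subseteq\mathfrak{S}_{kb}$. The edge cases $\alpha^*\in\{0,1\}$ need a separate remark, because the KB error is only defined for $\alpha\in(0,1)$.

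The main obstacle is Step 3. It relies on the saddle-point argument inside Proposition~\ref{prop:equivalence}, and that argument delivers \emph{some} saddle $\alpha^*$, whereas the statement needs \emph{every} $\alpha^*$ in $\mathcal{I}(Z_{f^*})$. I would first try to close this gap using a fact from Step 1: at $\bbe[Z_{f^*}]=-x$ we have $\bbe[(Z_{f^*})_+]-x_-=\bbe[(Z_{f^*})_-]-x_+$. So in the representation \eqref{kusuoka error} the function $\alpha\mapsto(1-\alpha)(\cE_\alpha(Z_{f^*})-x_+)-\alpha x_-$ is constant. Every $\alpha$ is then a maximizer at $f^*$, and I would combine this with a minimax (Sion) argument on the compact interval $[0,1]$ to upgrade the pair $(f^*,\alpha)$ to a saddle point for each $\alpha\in\mathcal{I}(Z_{f^*})$. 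If that fails in general, I would fall back to the version that only asserts existence of a suitable $\alpha^*\in\mathcal{I}(Z_{f^*})$.
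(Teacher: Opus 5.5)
Your route is the paper's: the Regression Theorem, then Proposition~\ref{prop 2} with Theorem~\ref{dual cvar th}, then the saddle point of Proposition~\ref{prop:equivalence}, then back to the KB error. The obstacle you flag in Step~3 is genuine. Your proposed repair cannot close it, because the ``every $\alpha^*\in\mathcal{I}(Z_{f^*})$'' version is false.

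\emph{Why the repair fails.} For a function $M(f,\alpha)$ with a saddle point, the saddle set is the product of the primal optimal set with the maximizers of the dual function $\alpha\mapsto\inf_f M(f,\alpha)$. That $\alpha$ maximizes $M(f^*,\cdot)$ is only the easy half of the saddle condition. The other half, $f^*\in\argmin_f M(f,\alpha)$, is exactly the KB-optimality to be proved, so Sion cannot supply it. Your own observation shows the argument proves too much: in \eqref{kusuoka error}, $M(f^*,\cdot)$ is constant on all of $[0,1]$. The same reasoning would then make $f^*$ KB-optimal for every $\alpha\in(0,1)$, even when $0\notin\var_\alpha(Z_{f^*})$.

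\emph{Counterexample.} Let $(X,Y)$ be uniform on $\{(0,1),(0,-1),(3,0)\}$, $x=0$, and $f(X)=c_0+cX$. By Diagram~5,
$\cE_0(Z_f)=\tfrac12\bigl(\tfrac13(|1-c_0|+|1+c_0|+|c_0+3c|)+|c_0+c|\bigr)\ge\tfrac13$,
with equality only at $c_0=c=0$. So $f^*\equiv0$ is the unique SE solution, $Z_{f^*}$ takes the values $1,-1,0$, and $\mathcal{I}(Z_{f^*})=[1/3,2/3]$. Along $c_0=-3c$ one gets $\cE_\alpha(Z_f)=\frac{1+3c(2\alpha-1)}{3(1-\alpha)}$ for $|c|<1/3$, so $f^*$ is KB-optimal only for $\alpha=1/2$. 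For example, at $\alpha=1/3$ we have $\cE_\alpha(Z_{f^*})=1/2$, while $f(X)=-\tfrac12+\tfrac16X$ gives $5/12$. Thus $\mathcal{I}(Z_{f^*})$ is a necessary localization of the multiplier, not a sufficient one. The same caveat applies to the ``moreover'' clause of Proposition~\ref{prop:equivalence} if read as ``every $\alpha$''.

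\emph{What to prove instead.} Adopt your fallback, with $\alpha^*$ chosen as a maximizer of $\varphi(\alpha):=\inf_{f\in\cC}\{(1-\alpha)(\cD_\alpha(Z_f)-x_+)-\alpha x_-\}$.
\begin{itemize}
\item Such a maximizer exists because $\varphi$ is concave and upper semicontinuous on the compact $[0,1]$.
\item Compactness of $[0,1]$ is all Sion needs, so you can drop boundedness of $\{Z_f\}$, which fails for affine $\cC$ anyway.
\item Because the saddle set is a product, this one $\alpha^*$ pairs with \emph{every} $\hat f\in\mathfrak{S}_{se}$. Theorem~\ref{dual cvar th} then gives $\alpha^*\in\mathcal{I}(Z_{\hat f})$, $\hat f$ minimizes $\cD_{\alpha^*}(Z_f)$, and your Step~4 yields $\hat f\in\mathfrak{S}_{kb}$.
\end{itemize}
This, and not ``$f^*$ was arbitrary'' with an $f^*$-dependent $\alpha$, is what justifies $\mathfrak{S}_{se}\subseteq\mathfrak{S}_{kb}$ for a single $\alpha^*$.

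\emph{Edge cases.} The CVaR deviation vanishes at level $0$, and $(1-\alpha)\cD_\alpha(Z)\to0$ as $\alpha\to1$. Hence $\varphi(0)=-x_+\le0$ and $\varphi(1)=-x_-\le0$. So if $\min_f\cE_x(Z_f)>0$, then $\alpha^*\in(0,1)$. In that case your Step~1 claim $\bbe[Z_{f^*}]=-x$ also holds, since the argmin in the error projection is then unique. Only the zero-value case needs separate handling; it can occur because the quadrangle is merely subregular.
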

\begin{proof}

Let $f^* \in \mathfrak{S}_{se}$ be an optimal solution to \eqref{bm regression opt}--\eqref{bm regression constr} for some $x \in \bbr.$ Then 
\begin{equation*}
    \cD_x(Z_{f^*}) = \min_{f \in \cC} \ \cD_x(Z_f) \text{s.t.} 0 =x+\bbe[Z_{f}].
\end{equation*}
Thus by Proposition \ref{prop 2} and Theorem \ref{dual cvar th}
\begin{equation*}
    \cD_x(Z_{f^*}) = (1 -\alpha^*)(\cD_{\alpha^*}(Z_{f^*})-x_+) - \alpha^*x_- \text{with}  0=x+\bbe[Z_{f^*}] \in \var_{\alpha^*}(Z_{f^*}), \text{where} \alpha^* \in  \mathcal{I}(Z_{f^*}).
\end{equation*}
Therefore, Proposition \ref{prop:equivalence} implies that $f^*$ is an optimal solution to \eqref{q regression opt}--\eqref{q regression constr} and hence an optimal solution to \eqref{q regression problem}, i.e., $f^* \in \mathfrak{S}_{kb}.$
\end{proof}
% \begin{rem}[Reverse inclusion $\mathfrak{S}_{se} \supseteq \mathfrak{S}_{kb}$]
    
% \end{rem}
The equivalence of QR and BMR suggests an alternative way of specifying a quantile of a given distribution. Instead of setting a specific confidence level $\alpha \in [0,1],$ one can define a quantile using a parameter $x \in \bbr.$ This parameter allows a quantile to be represented as a biased (shifted) mean, where $x$ is the magnitude of the bias. The following example illustrates one of the many potential use cases of this methodology.

\begin{Ex}[The newsvendor problem]
\label{ex:newsvendor}
Let $Y\in\bbr$ denote a random demand and $\mathbf{X}\in\bbr^d$ observed contexts. Let $\gamma>0$ be the buying (unit) cost and $\delta>\gamma$ the selling (unit) price. We model the order quantity as an affine function
\[
f(\mathbf{X}) := \mathbf{c}^\top \mathbf{X} + c_0,\qquad (\mathbf{c}, c_0)\in\bbr^{d+1}\;.
\]
Define the residual $Z(\mathbf{c},c_0) := Y - f(\mathbf{X})$.

\medskip
\noindent\textbf{Question A (order policy under given prices):}
\emph{Given $(\gamma,\delta)$, what order policy $f(\mathbf{X})$ minimizes the expected underage/overage loss?}

\noindent\textbf{Answer A:}
Solve the following KB error optimization problem
\begin{equation}
\label{eq:newsvendor-pinball}
\min_{\mathbf{c},\,c_0}\;
\bbe\!\left[\left(1-\frac{\gamma}{\delta}\right)\,Z(\mathbf{c},c_0)_+ \;+\; \frac{\gamma}{\delta}\,Z(\mathbf{c},c_0)_-\right].
\end{equation}
Problem \eqref{eq:newsvendor-pinball} is the quantile regression problem at level
\[
\alpha \;=\; 1 - \frac{\gamma}{\delta}\in(0,1).
\]
Hence any minimizer $(\mathbf{c}^*,c_0^*)$ satisfies
\[
\mathbf{c}^{*\top}\mathbf{X}+c_0^* \;\in\; q_{\alpha}(Y\mid \mathbf{X})\quad\text{a.s.}
\]
In the no-context case $\mathbf{X}\equiv \mathbf{0}$, we obtain $c_0^* \in q_{\alpha}(Y)$, the classical newsvendor quantile rule.

\medskip
\noindent\textbf{Question B (pricing from a target stated in \emph{units}):}
\emph{Suppose the decision-maker specifies a practically meaningful target $x\in\bbr$ for the exceedance/shortfall of expected demand. What selling price $\delta$ makes the optimal order policy consistent with this target?}

\noindent\textbf{Answer B:}
Let $x\in\bbr$ encode a target exceedance of the mean (in the sense of the biased-mean quadrangle), and consider the superexpectation error $\cE_x$.
Solve the \emph{biased-mean regression}
\begin{equation}
\label{eq:bmr}
\min_{\mathbf{c},\,c_0}\;\cE_x \bigl(Z(\mathbf{c},c_0)\bigr) \, .
\end{equation}
Let $(\mathbf{c}^*,c_0^*)$ be any minimizer of \eqref{eq:bmr} and define the induced quantile level
\[
\alpha^* \;:=\; \bbp \bigl(Z(\mathbf{c}^*,c_0^*)\le 0\bigr) \;=\; \bbp \bigl(Y \le \mathbf{c}^{*\top}\mathbf{X}+c_0^*\bigr) \, .
\]
Then by Theorem~\ref{Th: equivalence of qr and bmr} there exists a quantile level $\alpha=\alpha^*$ such that $(\mathbf{c}^*,c_0^*)$ also solves the $\alpha$--quantile regression. Consequently, to \emph{choose the selling price} $\delta$ consistent with the biased-mean target $x$, set
\[
\delta \;=\; \frac{\gamma}{1-\alpha^*},
\]
because the newsvendor identity $\alpha = 1 - \gamma/\delta$ implies $\delta = \gamma/(1-\alpha)$ and $\alpha=\alpha^*$ at the optimum. Operationally:
\begin{enumerate}
  \item Solve \eqref{eq:bmr} for $(\mathbf{c}^*,c_0^*)$ at the desired $x$.
  \item Compute $\alpha^*=\bbp \bigl(Y \le \mathbf{c}^{*\top}\mathbf{X}+c_0^*\bigr)$.
  \item Set $\delta = \gamma/(1-\alpha^*)$.
\end{enumerate}

Formulation \eqref{eq:newsvendor-pinball} answers: \emph{``Given prices, what is the optimal order policy?''} By contrast, \eqref{eq:bmr} answers: \emph{``Given a target buffer $x$ stated in natural units of demand, what price $\delta$ realizes the resulting optimal order policy?''} The latter is often easier to elicit from practitioners who work with units rather than in confidence levels.
\end{Ex}

In particular, when $x=0,$ one recovers a quantile coinciding with (or containing) a mean of a distribution. The following Corollary~\ref{cor: unbiased l1} discusses this in more detail.

\begin{Cor}[Unbiased $\cL^1$ Regression]\label{cor: unbiased l1} For $x=0,$ problem \eqref{bm regression opt}--\eqref{bm regression constr} is equivalent to (see Corollary \ref{cor mean-abs dev})
\begin{align}
    \min_{f \in \cC} &\quad \|Z_f \|_1 \label{l1 obj}\\
    \textrm{s.t.} &\quad \bbe[Z_f] = 0 \label{l1 constr},
\end{align}
which is the unbiased  $\cL^1$ regression problem, hereafter referred to as the superexpectation (SE) regression. 

According to Theorem \ref{Th: equivalence of qr and bmr} and \cite[Regression Theorem]{Quadrangle}, the optimal solution to \eqref{l1 obj}--\eqref{l1 constr} is $f^*(\mathbf{x}) = \var_\alpha(Y|\mathbf{X} = \mathbf{x}),$ where $\alpha \in (0,1)$ is such that $0 = \bbe[Z_{f^*}] = \var_\alpha(Z_{f^*}),$ implying that
\begin{equation}\label{eq:unbiased quantile}
    \bbe[\var_\alpha(Y|\mathbf{X})] = \bbe[Y],
\end{equation}
when the conditional distribution of $Y|\mathbf{X} = \mathbf{x}$ is continuous.
\end{Cor}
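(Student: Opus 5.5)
The plan is to prove the corollary in two parts: first the reduction of \eqref{bm regression opt}--\eqref{bm regression constr} at $x=0$ to \eqref{l1 obj}--\eqref{l1 constr}, then the identification of the optimal $f^*$ with a conditional quantile, from which \eqref{eq:unbiased quantile} follows.

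\textbf{Reduction to the $\cL^1$ problem.} Set $x=0$. Then $\cS_0(Z_f)=\bbe[Z_f]$, so the constraint \eqref{bm regression constr} is exactly $\bbe[Z_f]=0$. By Corollary~\ref{cor mean-abs dev}, $\cD_0(Z_f)=\tfrac12\|Z_f-\bbe[Z_f]\|_1$. On the feasible set this equals $\tfrac12\|Z_f\|_1$. The factor $\tfrac12$ does not change the argmin, so the two problems have the same solution sets. By the Regression Theorem, this also coincides with the solution set of $\min_f \cE_0(Z_f)$.

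\textbf{Identification with a quantile regression.} Take $f^*$ optimal and apply Theorem~\ref{Th: equivalence of qr and bmr} with $x=0$ and $\alpha^*\in[\bbp(Z_{f^*}<0),\bbp(Z_{f^*}\le 0)]$. This gives that $f^*$ solves the quantile regression \eqref{q regression problem} at level $\alpha^*$. By the Regression Theorem, $0\in\cS_{\alpha^*}(Z_{f^*})=\var_{\alpha^*}(Z_{f^*})$, and we also have $\bbe[Z_{f^*}]=0$. This yields $0=\bbe[Z_{f^*}]\in\var_{\alpha^*}(Z_{f^*})$.

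The statement $f^*(\mathbf{x})=\var_\alpha(Y\mid\mathbf{X}=\mathbf{x})$ requires that the class $\cC$ be rich enough to contain the conditional quantile function, e.g.\ all measurable functions. In that case the standard pointwise argument applies: the KB error is minimized by minimizing $\bbe[\cE_\alpha(Y-c)\mid\mathbf{X}=\mathbf{x}]$ over $c$ for each $\mathbf{x}$, and Theorem~\ref{CVaR opt th} identifies the minimizer as the conditional quantile. I would state this richness of $\cC$ as an implicit assumption.

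\textbf{Deriving \eqref{eq:unbiased quantile}.} With $f^*(\mathbf{X})=\var_\alpha(Y\mid\mathbf{X})$, take expectations in the constraint:
\[
0=\bbe[Z_{f^*}]=\bbe[Y]-\bbe[\var_\alpha(Y\mid\mathbf{X})].
\]
Continuity of the conditional distribution is used so that $\var_\alpha(Y\mid\mathbf{X}=\mathbf{x})$ is single-valued, or at least can be selected measurably. Then $\bbp(Z_{f^*}\le 0)=\bbp(Z_{f^*}<0)=\alpha$, so $\alpha^*$ is uniquely determined and the same $\alpha$ works for every $\mathbf{x}$.

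\textbf{Main obstacle.} The delicate step is the second part: justifying that the optimizer of the quantile regression over $\cC$ is exactly the conditional quantile, with one common level $\alpha$. I would handle it by appealing to the Regression Theorem under the richness assumption on $\cC$, and by using continuity to remove the interval-valued ambiguity in $\var_\alpha$ and in $\mathcal{I}(Z_{f^*})$.
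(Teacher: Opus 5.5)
Your proposal is correct and follows the route the paper itself indicates: reduce via $\cS_0(Z)=\bbe[Z]$ and $\cD_0(Z)=\tfrac12\|Z-\bbe[Z]\|_1$ (Corollary~\ref{cor mean-abs dev}), then combine Theorem~\ref{Th: equivalence of qr and bmr} with the Regression Theorem and take expectations in the constraint $\bbe[Z_{f^*}]=0$. The paper leaves two points implicit, and you make both explicit: $\cC$ must be rich enough to contain the conditional quantile function, and continuity is what collapses $\mathcal{I}(Z_{f^*})$ and the quantile interval to single values.
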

It is well-known that the unbiased $\cL^2$ regression is simply the $\cL^2$ norm minimization
\begin{equation}\label{l2 regression problem}
    \min_{f \in \cC} \quad \|Z_f\|_2,
\end{equation}
since the optimal solution to \eqref{l2 regression problem} is the conditional mean\footnote{assuming that $f^* \in \cC$} $f^*(\mathbf{x}) = \bbe[Y|\mathbf{X} = \mathbf{x}]$ or more generally $f^*(\mathbf{X}) = \bbe[Y|\mathbf{X}],$ which is unbiased due to the ``tower property'' of the conditional expectation, i.e., $\bbe[\bbe[Y|\mathbf{X}]] = \bbe[Y].$

Given \eqref{eq:unbiased quantile}, the natural question is when $\bbe[Y|\mathbf{X}] = \var_\alpha(Y|\mathbf{X})?$ It is possible to answer this question in the case of linear regression, i.e., when $$\cC = \operatorname{Lin}(\bbr^{d}):= \lset f:\bbr^{d}\to\bbr \mset f(\mathbf{x}) = c_0 + \mathbf{c}^\top\mathbf{x} \quad \forall (c_0,\mathbf{c}) \in \bbr^{d+1} \rset$$ is a class of linear functions on $\bbr^{d}.$

\subsection{Linear Regression and Estimation}
Estimation via linear regression is one of the central topics of the RQ theory and classical statistics. It is well-known (see \cite[Regression Theorem]{Quadrangle}) that given a functional of expectation type, i.e. $\cE(X) = \bbe[e(X)],$ where $e: \bbr \to \bbr$ is such that $\cE$ is a (sub)regular error, the solution to the generalized regression problem \eqref{regression problem} is a conditional statistic $\cS(Y|\mathbf{X})$ related to $\cE$ (i.e., $\cE$ and $\cS$ are in the same quadrangle). The natural question is whether this is the case for $\cE_0(X) =  \max \{\bbe[X_+], \bbe[X_-] \}$ for which the associated statistic is $\bbe[X]$. Here, we answer this question in the case of linear regression.

Before stating the main theorem, we provide one technical definition needed for its proof.
\begin{Def}[Expectation of  random set]
    For a measurable set-valued mapping $S$, we say that $S(X)$ is a \emph{random set,} with possible outcomes $\{S(X), \ X\in \cX \ (\textrm{support of} \ X)\}$ occurring according to the probability distribution of $X.$ Its \emph{expectation} is defined as $$\bbe[S(X)]= \left\{ \bbe[v(X)] \mset v(X) \in S(X) \ \forall \; X \in \cX, \ v(X) \text{integrable}\right\}$$ see \cite{RoysetWets2021} for more details. 
\end{Def}

\begin{Th}[Equivalence of OLS and SE Regressions]\label{th: equiv of ols and bmr}
Let $X_i, \ i = 1,\ldots, d, \ Y$ and, $\varepsilon$ be random variables defined on $\cL^1(\Omega),$ and let 
\begin{equation}\label{true model}
    Y = f^*(\mathbf{X}) + \varepsilon \quad a.s., \quad f^* \in \operatorname{Lin}(\bbr^{d}), \quad \mathbf{X} = (X_1, \ldots, X_d)^\top
\end{equation}
be a true regression law, where (a) $\varepsilon$ and $X_i$ are independent, (b) $\bbe[\varepsilon] = 0 \, .$
Then 
\begin{equation}\label{eq: ols and se}
    (i) \ f^* =  \argmin\limits_{f \in \operatorname{Lin}(\bbr^d)} \|Z_f\|_2  \ \textrm{ if }  \ \|Z_{f^*}\|_2 < \infty, \qquad (ii) \ f^* \in  \argmin\limits_{f \in \operatorname{Lin}(\bbr^d)} \cE_0(Z_f) \, .
\end{equation}    
\end{Th}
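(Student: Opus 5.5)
Both parts reduce to the same device: for linear $f$, write $Z_f = Z_{f^*} + (f^*-f)(\mathbf{X}) = \varepsilon + W$, where $W := a + \mathbf{b}^\top\mathbf{X}$ with $(a,\mathbf{b}) = (c_0^*-c_0,\ \mathbf{c}^*-\mathbf{c})$. Independence (a) and zero mean (b) then let us compare $Z_f$ with $Z_{f^*}=\varepsilon$ directly.

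For part (i), assume $\|\varepsilon\|_2<\infty$. If $\|W\|_2=\infty$, then $\|Z_f\|_2=\infty$ and $f$ is not better. Otherwise expand
\[
\bbe[(\varepsilon+W)^2]=\bbe[\varepsilon^2]+2\bbe[\varepsilon]\bbe[W]+\bbe[W^2]=\bbe[\varepsilon^2]+\bbe[W^2],
\]
using independence and $\bbe[\varepsilon]=0$. Hence $f^*$ is a minimizer, and any other minimizer has $W=0$ a.s. Uniqueness of $f^*$ as an element of $\operatorname{Lin}(\bbr^d)$ is understood modulo a.s. equality of $f(\mathbf{X})$, which becomes genuine uniqueness if the $X_i$ are affinely independent. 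I would state this caveat explicitly, since the statement writes ``$=$''.

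For part (ii), the goal is $\cE_0(\varepsilon+W)\ge\cE_0(\varepsilon)$ for every such $W$. Here the Regression Theorem cannot be invoked, because $\cE_0$ is not of expectation type. Instead use the representation $\cE_0(Z)=\cD_0(Z)+|\bbe[Z]|$ from Diagram~5, which follows from $\cE_0(Z)=\tfrac12\|Z\|_1+\tfrac12|\bbe Z|$ and $\tfrac12\|Z\|_1\ge$ the $\cD_0$ term. Alternatively, use (Q1) directly:
\[
\cE_0(\varepsilon+W)\ \ge\ \min_C\cE_0(\varepsilon+W-C)=\cD_0(\varepsilon+W)=\tfrac12\bbe|\varepsilon+W-\bbe[\varepsilon+W]|.
\]
Since $\bbe\varepsilon=0$, we have $\cE_0(\varepsilon)=\cD_0(\varepsilon)=\tfrac12\bbe|\varepsilon|$, so it suffices to show
\[
\bbe|\varepsilon+V|\ge\bbe|\varepsilon|,\qquad V:=W-\bbe W,\quad V\ \text{independent of}\ \varepsilon,\quad \bbe V=0.
\]
This is a Jensen argument after conditioning on $\varepsilon$. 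By independence, $\bbe\bigl[|\varepsilon+V|\,\big|\,\varepsilon\bigr]=g(\varepsilon)$ with $g(t)=\bbe|t+V|\ge|t+\bbe V|=|t|$. Integrating gives the claim. Integrability is fine because everything lies in $\cL^1$.

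The main obstacle is being careful that $W$ is a function of $\mathbf{X}$ only, so that it is independent of $\varepsilon$; the independence in (a) must be joint independence of $\varepsilon$ and the vector $\mathbf{X}$. I would read hypothesis (a) that way. The conditioning step (Fubini on the product law) is the one place where this matters.

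Finally, note that (ii) gives only membership in the argmin, not equality. $\cE_0$ is piecewise linear, so ties are possible, for instance when $V$ is small and $\varepsilon$ is bounded away from $0$. That is consistent with the statement's use of ``$\in$''.
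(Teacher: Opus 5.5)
Your proof is correct, but for (ii) it takes a genuinely different route from the paper.

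\textbf{The paper's route.} The paper verifies the first-order condition $(0,\mathbf 0)\in\partial\cE_0(Z(c_0,\mathbf c))$ at $(a_0,\mathbf a)$:
\begin{itemize}
\item both pieces of the max are active because $\bbe[\varepsilon]=0$;
\item the Dubovitsky--Milyutin formula writes the subdifferential of the max as a convex hull;
\item $\partial$ and $\bbe$ are interchanged, using expectations of random sets;
\item independence enters only to replace $\bbe[\mathbf X\mid\varepsilon>0]$ and $\bbe[\mathbf X\mid\varepsilon<0]$ by $\bbe[\mathbf X]$, so that $(0,\mathbf 0)$ is a convex combination of $-\nu_+(1,\bbe[\mathbf X])$ and $\nu_-(1,\bbe[\mathbf X])$.
\end{itemize}
Part (i) is simply delegated to the Regression Theorem.

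\textbf{Your route.} You make a global comparison,
\[
\cE_0(Z_f)\ge\cD_0(Z_f)=\tfrac12\bbe|\varepsilon+V|\ge\tfrac12\bbe|\varepsilon|=\cE_0(Z_{f^*}),
\]
using only (Q1) from Proposition~\ref{Regularity of BMQ} and Jensen. You prove (i) by the Pythagorean identity.

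\textbf{What each buys.} Your argument avoids nonsmooth calculus and the interchange of $\partial$ with $\bbe$, and it makes the possible ties transparent. It also never uses that $f$ is affine. It holds verbatim for any class $\cC\ni f^*$ with $f(\mathbf X)\in\cL^1$ when $\varepsilon$ is independent of the vector $\mathbf X$. That is a Regression-Theorem-type statement for an error that is not of expectation type. The paper's argument is tied to affine $f$. In exchange, it produces the convex-combination weight $\nu_-/(\nu_++\nu_-)$, which equals $F_\varepsilon(0)$ when $\bbp(\varepsilon=0)=0$. This is exactly the quantile level linking SE regression to quantile regression in Corollary~\ref{cor:three regressions}.

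\textbf{First correction.} Strike the ``representation'' sentence; it is wrong, though your proof never uses it.
\begin{itemize}
\item $\cE_0(Z)=\cD_0(Z)+|\bbe[Z]|$ is false. For $Z\in\{0,2\}$ with equal probabilities, $\cE_0(Z)=1$ but $\cD_0(Z)+|\bbe[Z]|=3/2$.
\item $\tfrac12\|Z\|_1\ge\cD_0(Z)$ is also false. For $Z=1$ with probability $0.1$ and $0$ otherwise, $\tfrac12\|Z\|_1=0.05<0.09=\cD_0(Z)$.
\item Diagram~5 asserts neither. What holds is $\cD_0(Z)\le\cE_0(Z)\le\cD_0(Z)+|\bbe[Z]|$. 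Only the lower bound is needed, and your (Q1) line supplies it.
\end{itemize}

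\textbf{Second correction.} In the linear case you do not need to strengthen (a) to joint independence. Replace the Fubini step by the subgradient inequality
\[
\bbe|\varepsilon+V|\ \ge\ \bbe\bigl[\mathrm{sgn}(\varepsilon)(\varepsilon+V)\bigr]=\bbe|\varepsilon|+\mathbf b^\top\bbe\bigl[\mathrm{sgn}(\varepsilon)(\mathbf X-\bbe[\mathbf X])\bigr]=\bbe|\varepsilon|.
\]
This uses only that each $X_i$ is independent of $\varepsilon$. It is the same sign/regressor orthogonality that the paper's first-order condition checks. The cross term in (i) likewise vanishes under componentwise independence.
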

\begin{proof}
 \underline{\emph{Proof of (i)}}.
The proof directly follows from the  \cite[Regression Theorem]{Quadrangle}.

\underline{\emph{Proof of (ii)}}. Let us consider the linear regression problem 
\begin{equation*}
  \min\limits_{f \in \operatorname{Lin}(\bbr^d)} \cE_0(Z_f) \, , 
\end{equation*}
which can be written in explicit form as follows
\begin{equation}\label{pw lin regr}
\min_{c_0, \mathbf{c}} \max \{\bbe[Z(c_0, \mathbf{c})_+],  \bbe[Z(c_0, \mathbf{c})_-]\} \,    
\end{equation}
where $Z(c_0, \mathbf{c}):= Y - c_0 - \mathbf{c}^\top \mathbf{X}.$ Problem \eqref{pw lin regr} is a convex optimization problem. An optimal solution $(c^*_0, \mathbf{c}^*)$ of \eqref{pw lin regr} satisfies the following necessary and sufficient first-order optimality condition
\begin{equation}\label{pw extr cond cont}
    (0, \mathbf{0}) \in \partial \bigl (\max \{\bbe[Z(c^*_0, \mathbf{c}^*)_+],  \bbe[Z(c^*_0, \mathbf{c}^*)_-]\}\bigr),
\end{equation}
where $\partial (\cdot)$ denotes the subdifferential of a convex function. By the Dubovitsky-Milyutin theorem, the condition \eqref{pw extr cond cont} is equivalent to the following system
\begin{equation}\label{subdif dub-mil cont}
\left\{\begin{matrix}
 (0,\mathbf{0}) \in \textrm{conv}\left(\partial\left(\bbe[Z(c^*_0, \mathbf{c}^*)_+]\right), \partial\left(\bbe[Z(c^*_0, \mathbf{c}^*)_-]\right)\right),\\
 \bbe[Z(c^*_0, \mathbf{c}^*)_+]=\bbe[Z(c^*_0, \mathbf{c}^*)_-], 
\end{matrix}\right.   
\end{equation}
where $\textrm{conv}(\cdot \, , \cdot)$ denotes the convex hull of the union of two sets.

Let us compute the subdifferentials $\partial\left(\bbe[Z(c^*_0, \mathbf{c}^*)_+]\right)$ and $\partial\left(\bbe[Z(c^*_0, \mathbf{c}^*)_-]\right)$ in \eqref{subdif dub-mil cont}. Since $x \mapsto x_{\pm}$ is a proper, convex, and lower-semicontinuous function, the subdifferentiation and the expectation operators can be interchanged. Thus
\begin{align}
 \partial\bbe[Z(c^*_0, \mathbf{c}^*)_+] &= \bbe\left[\begin{cases}
-(1,\mathbf{X}), & Z(c^*_0, \mathbf{c}^*)>0 \\
(0,\mathbf{0}), & Z(c^*_0, \mathbf{c}^*)< 0 \\
\textrm{conv}\left(-(1,\mathbf{X}),(0,\mathbf{0})\right), & Z(c^*_0, \mathbf{c}^*) = 0 
\end{cases}  \right],  \label{dub-mil for max plus} \\ 
\partial\bbe[Z(c^*_0, \mathbf{c}^*)_-] &= \bbe\left[\begin{cases}
(1,\mathbf{X}), & Z(c^*_0, \mathbf{c}^*)<0 \\
(0,\mathbf{0}), & Z(c^*_0, \mathbf{c}^*)> 0 \\
\textrm{conv}\left((1,\mathbf{X}),(0,\mathbf{0})\right), & Z(c^*_0, \mathbf{c}^*) = 0
\end{cases}  \right]. \label{dub-mil for max minus}
\end{align}

Let $f^*(\mathbf{X}) = a_0 + \mathbf{a}^\top \mathbf{X}$. To prove the theorem, we will show that $(a_0, \mathbf{a})$ satisfies conditions \eqref{subdif dub-mil cont}. Let $c_0^* = a_0$ and $\mathbf{c}^* = \mathbf{a}$, then \eqref{true model} implies that $Z(a_0, \mathbf{a}) = \varepsilon.$ Hence, given \eqref{dub-mil for max plus} and \eqref{dub-mil for max minus}, system \eqref{subdif dub-mil cont} is equivalent to 
\begin{equation}\label{extr condition 1}
 \left\{\begin{matrix}
(0,\mathbf{0}) \in \textrm{conv}\left(\bbe\left[\begin{cases}
-(1,\mathbf{X}), &\varepsilon>0 \\
(0,\mathbf{0}), & \varepsilon< 0 \\
\textrm{conv}\left(-(1,\mathbf{X}),(0,\mathbf{0})\right), & \varepsilon = 0 
\end{cases}  \right], \bbe\left[\begin{cases}
(1,\mathbf{X}), & \varepsilon<0 \\
(0,\mathbf{0}), & \varepsilon> 0 \\
\textrm{conv}\left((1,\mathbf{X}),(0,\mathbf{0})\right), & \varepsilon = 0
\end{cases}  \right] \right) \\
\\
 \bbe[\varepsilon_+]=\bbe[\varepsilon_-]
\end{matrix} \right . 
\end{equation}
  Denote by $\mathfrak{M}_- := \bbe\big[ \mathfrak{C}_-(\mathbf{X}) \mset \varepsilon=0\big]$ the conditional expectation of a random set $\mathfrak{C}_-(\mathbf{X}):=\textrm{conv}\left(-(1,\mathbf{X}),(0,\mathbf{0})\right) = \{-\lambda(1,\mathbf{X})\mset \lambda \in [0,1]\},$ by $\mathfrak{M}_+ := \bbe\big[ \mathfrak{C}_+(\mathbf{X}) \mset \varepsilon=0\big]$ the conditional expectation of a random set $\mathfrak{C}_+(\mathbf{X}):=\textrm{conv}\left((1,\mathbf{X}),(0,\mathbf{0})\right) = \{\lambda(1,\mathbf{X})\mset \lambda \in [0,1]\},$ and by $\varkappa = \bbp(\varepsilon=0).$ Define 
 $$\mathfrak{H}(\varepsilon):= \operatorname{conv} \left(-\nu_+\big( 1, \bbe\big[\mathbf{X}|\varepsilon > 0\big]\big) + \varkappa \mathfrak{M}_-,  \ \nu_-\big( 1, \bbe\big[\mathbf{X}|\varepsilon < 0\big]\big) + \varkappa \mathfrak{M}_+\right).
 $$
 Then the system \eqref{extr condition 1} can be rewritten in terms of conditional expectation
\begin{equation}\label{opt cond linear err 2}
 \left\{\begin{matrix}
(0, \mathbf{0}) \in \mathfrak{H}(\varepsilon)\\
\bbe[\varepsilon_+]=\bbe[\varepsilon_-]
\end{matrix}\right. 
\end{equation}
where $\nu_+ = \bbp(\varepsilon>0)$ and $\nu_- = \bbp(\varepsilon < 0)$. Assumption \emph{(a)} and the fact that $0 \in \bbe[\mathfrak{C}_+(\mathbf{X})]$ and $0 \in \bbe[\mathfrak{C}_-(\mathbf{X})]$ imply that the first condition in \eqref{opt cond linear err 2} is satisfied since 
$$(0, \mathbf{0}) \in \textrm{conv} \left(-\nu_+\big( 1, \bbe\big[\mathbf{X}\big]\big) + \varkappa \bbe[\mathfrak{C}_-(\mathbf{X})],  \ \nu_-\big( 1, \bbe\big[\mathbf{X}\big]\big) + \varkappa \bbe[\mathfrak{C}_+(\mathbf{X})]\right) \, .
$$
 Finally, according to the assumption \emph{(b)}, the second condition in \eqref{opt cond linear err 2} is also satisfied due to the Lebesgue decomposition of $\varepsilon$, i.e.,   $\varepsilon = \varepsilon_+ - \varepsilon_-$. 
 Therefore,  $(a_0, \mathbf{a}) \in \min\limits_{c_0, \mathbf{c}} \cE_0(Z(c_0, \mathbf{c}))$, which completes the proof.
\end{proof}

\begin{Cor}[Interpretation via quantile regression]\label{cor:three regressions}
Results obtained in Theorem \ref{Th: equivalence of qr and bmr}  and Theorem \ref{th: equiv of ols and bmr} imply that the SE regression estimates the conditional mean by automatically estimating the conditional quantile with confidence level $\alpha$ such that 
\begin{equation} \label{eq: residual}
   0 = \bbe[Z_{f^*}] \in  \var_\alpha(Z_{f^*}), \quad f^* \in  \argmin\limits_{f \in \operatorname{Lin}(\bbr^d)} \cE_0(Z_f) 
\end{equation}
In particular, when the joint distribution of the regressant $Y$ and regressor $\mathbf{X}$ is continuous, inclusions in \eqref{eq: residual} can be replaced with equality.

Moreover, let $\mathfrak{S}, \mathfrak{S}_0, \mathfrak{S}_\alpha$ denote a set of optimal solutions to the linear regression problems with $\cL^2$, $\cE_0$, and $\cE_\alpha$ errors, respectively. Then the following chain of inclusions holds:
\begin{equation}\label{eq: chain}
    \mathfrak{S} \subseteq \mathfrak{S}_0 \subseteq \mathfrak{S}_\alpha \, 
\end{equation}
with alpha defined in \eqref{eq: residual}. 

It is worth noting that similar results were obtained by \cite{Anton2024expectile} for expectile regression. 
\end{Cor}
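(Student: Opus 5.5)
The corollary is assembled from Theorem~\ref{th: equiv of ols and bmr} and Theorem~\ref{Th: equivalence of qr and bmr}. It has two parts: the characterization \eqref{eq: residual} of the quantile level, and the chain \eqref{eq: chain}.

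\textbf{Characterization of $\alpha$.} Take $f^* \in \argmin_{f \in \operatorname{Lin}(\bbr^d)} \cE_0(Z_f)$. By the Regression Theorem of \cite{Quadrangle}, with $\cS_0 = \bbe$ from Diagram~4, minimizing $\cE_0$ is equivalent to minimizing $\cD_0(Z_f)$ subject to $\cS_0(Z_f)=0$. Because $\operatorname{Lin}(\bbr^d)$ contains constants, shifting $c_0$ gives $\bbe[Z_{f^*}]=0$ at the optimum.

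Next, apply Theorem~\ref{Th: equivalence of qr and bmr} with $x=0$ and choose $\alpha \in \mathcal{I}(Z_{f^*}) = [\bbp(Z_{f^*}<0),\bbp(Z_{f^*}\le 0)]$. By the definition of VaR this interval is exactly the set of levels $\alpha$ for which $0 \in \var_\alpha(Z_{f^*})$. This is also the maximizer set in Theorem~\ref{dual cvar th} applied to $Z_{f^*}-\bbe[Z_{f^*}] = Z_{f^*}$ at the point $0$. Together these give \eqref{eq: residual}.

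In the continuous case, $\bbp(Z_{f^*}<0) = \bbp(Z_{f^*}\le 0)$, so $\alpha$ is unique. If in addition $F_{Z_{f^*}}$ is strictly increasing near $0$, then $\var_\alpha(Z_{f^*})$ is a single point and the inclusion becomes an equality. I would add this nondegeneracy remark explicitly, since "continuous joint distribution" is the stated hypothesis.

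\textbf{The chain \eqref{eq: chain}.} The inclusion $\mathfrak{S}_0\subseteq\mathfrak{S}_\alpha$ is Theorem~\ref{Th: equivalence of qr and bmr} with $x=0$, using the $\alpha$ just identified.

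For $\mathfrak{S}\subseteq\mathfrak{S}_0$, I would work under the hypotheses of Theorem~\ref{th: equiv of ols and bmr}: a linear true law with $\varepsilon$ independent of $\mathbf{X}$, $\bbe[\varepsilon]=0$, and finite second moment. Part~(i) of that theorem identifies $\mathfrak{S}=\{f^*\}$, and part~(ii) shows $f^*\in\mathfrak{S}_0$.

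\textbf{Expected obstacle: consistency of $\alpha$.} The level $\alpha$ in \eqref{eq: residual} is defined through an arbitrary element of $\mathfrak{S}_0$, which need not be the OLS solution. The chain requires one $\alpha$ that works for every element of $\mathfrak{S}_0$. I would handle this as follows.

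Take $\alpha$ defined from the OLS solution $f^*$, so $Z_{f^*}=\varepsilon$ and $\alpha\in[\bbp(\varepsilon<0),\bbp(\varepsilon\le0)]$. Then invoke Proposition~\ref{prop:equivalence} in its saddle-point form. For a convex–concave problem, the set of saddle points is a product of the primal and dual optimal sets. Hence any saddle multiplier $\alpha^*$ paired with $f^*$ is also paired with every other element of $\mathfrak{S}_0$. Therefore the whole set $\mathfrak{S}_0$ lies in $\argmin \cD_{\alpha^*}$ under the statistic constraint, and so lies in $\mathfrak{S}_\alpha$.

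If this product-structure argument turns out to need extra care, the fallback is to state \eqref{eq: chain} elementwise: for each $f\in\mathfrak{S}_0$, take $\alpha$ from \eqref{eq: residual} for that $f$. This still yields $\mathfrak{S}\subseteq\mathfrak{S}_0$ and $f\in\mathfrak{S}_\alpha$.
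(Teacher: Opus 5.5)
Your assembly is the paper's own. The paper gives no argument beyond citing Theorem~\ref{th: equiv of ols and bmr} for $\mathfrak{S}\subseteq\mathfrak{S}_0$ and Theorem~\ref{Th: equivalence of qr and bmr} at $x=0$ for $\mathfrak{S}_0\subseteq\mathfrak{S}_\alpha$. Your identification $\mathcal{I}(Z_{f^*})=\{\alpha: 0\in\var_\alpha(Z_{f^*})\}$ is correct, and so is the strict-monotonicity caveat for the equality case.

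The consistency problem you raise is genuine, and the paper does not address it. Taking $\alpha\in\mathcal{I}(\varepsilon)$ is the right choice, but your saddle-point step has a gap. The product structure gives the conclusion only for multipliers $\alpha^*$ in the dual optimal set $A^*$, and a priori you only know $A^*\subseteq\mathcal{I}(Z_f)$ for each $f\in\mathfrak{S}_0$. Without hypothesis (a) this inclusion can be strict when the residual has an atom at $0$. You pass from ``$\alpha^*$'' to your $\alpha$ without justification, and that passage is exactly where (a) must enter.

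To close the gap, show that $(f^*,\alpha)$ is itself a saddle point of $L(f,\beta)=(1-\beta)\cD_\beta(Z_f)$ on $\{f\in\operatorname{Lin}(\bbr^d):\bbe[Z_f]=0\}\times[0,1]$ for every $\alpha\in\mathcal{I}(\varepsilon)\cap(0,1)$. Recall that $\max_\beta L(f,\beta)=\cD_0(Z_f)=\cE_0(Z_f)$ there, by Proposition~\ref{prop 2}, so $\mathfrak{S}_0$ is the primal solution set.
\begin{itemize}
\item By (a), $Y\mid\mathbf{X}=\mathbf{x}$ is distributed as $f^*(\mathbf{x})+\varepsilon$, so $0\in\var_\alpha(\varepsilon)$ makes $f^*(\mathbf{X})$ a conditional $\alpha$-quantile.
\item Hence $f^*$ minimizes the KB error over all measurable $f$, so $f^*\in\mathfrak{S}_\alpha$, and by the Regression Theorem it minimizes $\cD_\alpha(Z_f)$.
\item Also, $\alpha$ maximizes $L(f^*,\cdot)$ by Theorem~\ref{dual cvar th}.
\end{itemize}
Product structure then makes $(f,\alpha)$ a saddle point for every $f\in\mathfrak{S}_0$. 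State what each half gives.
\begin{itemize}
\item The primal half gives minimality of $\cD_\alpha(Z_f)$ under the \emph{mean} constraint. This equals the unconstrained minimum by shift invariance and the free intercept.
\item The dual half gives $\alpha\in\argmax_\beta L(f,\beta)=\mathcal{I}(Z_f)$, i.e.\ $0\in\var_\alpha(Z_f)$.
\end{itemize}
Together, $\cE_\alpha(Z_f)=\cD_\alpha(Z_f)=\min_g\cE_\alpha(Z_g)$, so $f\in\mathfrak{S}_\alpha$.

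Your elementwise fallback does not work. Neither does reading Theorem~\ref{Th: equivalence of qr and bmr}, or the ``Moreover'' clause of Proposition~\ref{prop:equivalence}, as valid for \emph{every} $\alpha^*\in\mathcal{I}(Z_{f^*})$. Here is a counterexample.
\begin{itemize}
\item Let $\varepsilon$ and $X$ be independent, each uniform on $\{-1,1\}$, and $Y=\varepsilon$. Then $\mathfrak{S}=\{0\}$.
\item Since $\cE_0(Z)=\tfrac12\|Z\|_1+\tfrac12|\bbe[Z]|$, a direct computation gives $\mathfrak{S}_0=\{f(x)=cx: |c|\le 1\}$.
\item For $f'(x)=x$, the residual $Z_{f'}=\varepsilon-X$ takes values $-2,0,2$ with probabilities $\tfrac14,\tfrac12,\tfrac14$. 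So $\alpha=\tfrac14$ satisfies \eqref{eq: residual} with $f'$.
\item Yet at $\alpha=\tfrac14$ we have $\cE_\alpha(Z_{f'})=\cE_\alpha(\varepsilon)=\tfrac23>\tfrac13=\cE_\alpha(\varepsilon+1)$. So neither $f'$ nor the OLS solution lies in $\mathfrak{S}_{1/4}$.
\end{itemize}
Only $\alpha=\tfrac12$, the element of $\mathcal{I}(\varepsilon)$, makes the chain true. So $\alpha$ must be tied to $\varepsilon$, equivalently be a saddle multiplier, as in your main argument. Drop the fallback.
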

% \begin{rem}[Relation to expectile regression]
    
% \end{rem}

\subsection{Sample Average Approximation and Optimization}
Exact solving \eqref{eq: ols and se} requires access to the joint distribution of $(Y, \mathbf{X})$, which is unknown in most practical situations. Instead, one works with an i.i.d. random sample $\{(y_i, \mathbf{x}_i)\}_{i=1}^n$ assumed to be drawn from that joint distribution. A standard approach in this setting is the Sample Average Approximation (SAA) method (see, e.g., \cite{stochproglec}), which replaces expectations by empirical averages and solves the corresponding deterministic optimization problem.

The application of SAA to ordinary least squares (OLS) regression (see item (i) in \eqref{eq: ols and se}) is well studied and well understood in the classical statistics literature. By contrast, SE regression (see item (ii) in \eqref{eq: ols and se}) is a compound stochastic programming problem. The application of SAA to this type of problem has been studied by \cite{SAAErmoliev}. Moreover, given the equivalent formulation of SE regression as the unbiased $\cL^1$ regression problem, classical tools--such as uniform (or epigraphical) laws of large numbers, functional central limit theorems, and functional large-deviation results--are also applicable \citep{ArtsteinWets1995Consistency, stochproglec}.

The SAA formulation of the SE regression is as follows
\begin{equation}\label{SE regr SAA}
 \min_{c_0, \mathbf{c}} \max \left\{\frac{1}{n}\sum_{i=1}^n z_i(c_0, \mathbf{c})_+,  \frac{1}{n}\sum_{i=1}^n z_i(c_0, \mathbf{c})_- \right\} \,  ,  
\end{equation}
where $z_i(c_0, \mathbf{c}) = y_i - c_0 - \mathbf{c}^\top \mathbf{x}_i, \ i = 1,\ldots, n$. Formulation \eqref{SE regr SAA} is a piece-wise linear convex optimization problem, which can be efficiently solved by methods of nondifferentiable (a.k.a. nonsmooth) optimization \citep{Shor1998Nondifferentiable}. This property of \eqref{SE regr SAA} does not provide an algorithmic edge over OLS (smooth problems are usually solved faster than nonsmooth ones \citep{nesterov2018lectures}), yet the piecewise-linear objective yields statistical and modeling benefits.

\textit{Robustness to heavy tails and outliers.}
The objective in \eqref{SE regr SAA} is convex and piecewise linear. Consequently, it requires merely a finite first moment for well-posedness and consistency, in contrast to quadratic losses that demand finite second moments and exhibit pronounced sensitivity to extreme observations. Practically, this translates into stable estimation under heavy-tailed errors and outliers, without the tuning required by Huber-type losses \citep{Huber1964, HuberRonchetti2009} and Support Vector Regression method \citep{SVR, SVR2, NewSVM, SVRTutorial, Anton2022SVR}.

\textit{Formulation as linear programming.}
Problem \eqref{SE regr SAA} can be equivalently cast as a linear programming (LP) problem:
\begin{equation}\label{SE regr SAA LP}
  \begin{split}
            \min_{c_0, \mathbf{c},t, \mathbf{u}} &\quad t\\
           \textrm{s.t.} &\quad t \geq\frac{1}{n}\sum_{i=1}^n u_i \, ,\\
            &\quad  t \geq \frac{1}{n}\sum_{i=1}^n u_i - \frac{1}{n}\sum_{i=1}^n z_i(c_0,\mathbf{c}) \, ,\\
            &\quad u_i \geq 0 \, ,    \ \qquad\qquad i = 1,\ldots, n\\
             &\quad u_i \geq z_i(c_0,\mathbf{c}),  \   \ \  \quad  i = 1,\ldots, n \, .
        \end{split}   
\end{equation}
Formulation \eqref{SE regr SAA LP} has several modeling and optimization advantages. Specifically, it naturally supports linear constraints, effectively handles ill-conditioning due to the maturity of modern LP solvers (e.g., \cite{gurobi}), and more importantly, allows for efficient incorporation of integer variables, which is the subject of Section \ref{sec: Sparse Regression}. Additionally, recent advances in numerical algorithms for LP \citep{ApplegateEtAl2021PDLP} enable GPU acceleration, as implemented by \cite{NVIDIA_cuOpt_UserGuide_25_08}, achieving over 5,000x faster performance compared to traditional CPU-based solvers. These methods apply to both continuous LPs and mixed-integer LPs (MILPs).

\subsection{Sparse Regression} \label{sec: Sparse Regression}
Sparse regression (or best-subset selection) seeks generalized linear models by encouraging most coefficients to be exactly zero, thereby improving interpretability and prediction in high-dimensional settings where the number of regressors, $d$, may exceed the number of observations, $n$. Classical best-subset selection minimizes squared error with an $\ell_0$ pseudo-norm penalty, leading to NP-hard mixed-integer quadratic programming: 
\begin{equation}\label{OLS sparse}
   \min_{c_0, \mathbf{c}} \  \frac{1}{n}\sum_{i=1}^n z_i^2(c_0, \mathbf{c}) \quad \textrm{s.t.} \quad  \|\mathbf{c}\|_0 \leq k \, , 
\end{equation}
where $\|\mathbf{c}\|_0: = \operatorname{card}(\{j: c_j \neq 0\})$ and $d \geq k \in \mathbb{N}$.

Convex relaxations such as the Lasso replace $\ell_0$ with $\ell_1$ norm to yield tractable estimators with provable prediction and support-recovery guarantees under suitable design conditions (e.g., restricted eigenvalue/compatibility). However, due to the recent progress in the computational power of Mixed Integer Programming (MIP) solvers (see e.g., \cite{Bixby2012BriefHistory}), the original $\ell_0$ penalty formulation became more and more widely used. Moreover, MIP formulations of sparse regression \eqref{OLS sparse} have been shown to deliver sharper support-recovery performance than popular convex relaxations (like the Lasso) under comparable design and signal-strength conditions; see phase-transition and empirical results in \citet{BertsimasKingMazumder2016BestSubset,BertsimasVanParys2020SparseHD,BertsimasPauphiletVanParys2020SparseRegreview}.

There are two most common approaches to solving \eqref{OLS sparse}: 1) direct mixed-integer quadratic programming (MIQP) with branch-and-bound, cuts, and heuristics features efficiently implemented in commercial solvers, e.g., \cite{gurobi}; 2) cutting-plane (outer-approximation) scheme that iteratively adds valid inequalities to tighten the relaxation, leading to solving a sequence of MILPs \citep{BertsimasVanParys2020SparseHD}. Alternatively, given the results obtained in the sections above, we propose to solve  
\begin{equation}\label{SE regr SAA sparse}
 \min_{c_0, \mathbf{c}} \max \left\{\frac{1}{n}\sum_{i=1}^n z_i(c_0, \mathbf{c})_+,  \frac{1}{n}\sum_{i=1}^n z_i(c_0, \mathbf{c})_- \right\} \quad \textrm{s.t.} \quad \|\mathbf{c}\|_0 \leq k \, ,  
\end{equation}
which can be equivalently formulated as an MILP in view of \eqref{SE regr SAA LP}. Compared with the MIQP best-subset formulation \eqref{OLS sparse}, the SE regression model in \eqref{SE regr SAA sparse} yields a pure MILP, so every branch-and-bound node solves a linear program rather than a quadratic one--typically cheaper, more scalable, and numerically stabler. MILPs also unlock a larger, more mature ecosystem of cutting planes and presolve techniques, and make it straightforward to add linear side-constraints (e.g., monotonicity, fairness, etc.) without changing problem class. In practice, MILP nodes can leverage first-order LP technology (e.g., PDLP of \cite{ApplegateEtAl2021PDLP} mentioned earlier) and even GPU acceleration, which is not available for generic MIQPs. Finally, the piecewise-linear objective often leads to tighter and more informative dual bounds for tail- or imbalance-aware formulations, while avoiding conditioning issues common in quadratic losses.

\section{Case Studies}\label{sec: Case Studies}
This section conducts four case studies:
\begin{itemize}
  \item Equivalence of CVaR and superexpectation risk portfolio optimization;
  \item Equivalence between quantile regression and biased-mean regression;
  \item Equivalence between OLS and superexpectation regression;
  \item Sparse regression: OLS vs.\ superexpectation error.
\end{itemize}
\paragraph{Hardware:} Intel(R) Core(TM) i5-11400H @ 2.70GHz (12 CPUs).
\paragraph{Software:} Python \citep{python-psf}, Portfolio Safeguard \citep{PSG}, Gurobi \citep{gurobi}.

% \paragraph{Code:} I will include the link

\subsection{Equivalence of CVaR and Superexpectation Risk Portfolio Optimization}\label{subsec:portfolio_equival}
This subsection numerically validates the equivalence of portfolio optimization (see Example~\ref{portfolio opt example}) with SE deviation and CVaR deviation objectives. We consider the data set with $4$ assets and $10,000$ observations from the case study \cite{POEX}. 
% Codes, data, solutions, and mathematical problem formulations are posted at the link \#(see Problem \#). 

First, we solved the SE deviation portfolio optimization problem $25$ times by varying the parameter $x$ from $-10^{-4}$ to $5\times10^{-2}$ with a step $0.0020875$ and calculated the corresponding CVaR deviation parameter $\alpha$ using Proposition~\ref{prop:equivalence}. Second, we computed the CVaR deviation objective at the optimal SE deviation points for corresponding $\alpha$ and $x$. Third, we solved the CVaR deviation portfolio optimization problem $25$ times for the $\alpha$'s obtained in the first step and calculated the SE deviation objective at the optimal CVaR points for corresponding values of $x$. 

Figure~\ref{fig:portfolio} compares the optimal objectives of CVaR deviation and SE deviation portfolio optimization problems versus the objectives computed at optimal points for the SE deviation and CVaR deviation, respectively. We observe that the optimal SE deviation and CVaR deviation portfolios coincide with high precision for all corresponding $x$ and $\alpha$.

\begin{figure}[h!]
    \centering
    \includegraphics{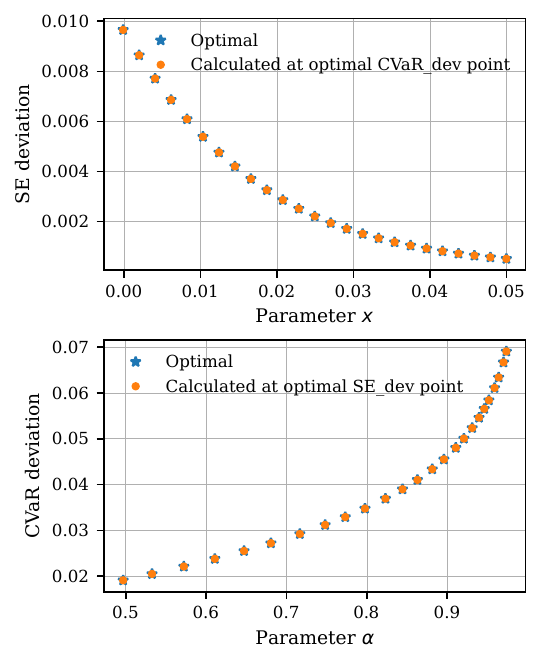}
    \caption{Portfolio optimization with SE deviation and CVaR deviation objectives.}
    \label{fig:portfolio}
\end{figure}

\subsection{Equivalence between Quantile Regression and Biased Mean Regression}
This subsection numerically confirms the equivalence between quantile regression and biased mean regression (see Theorem~\ref{Th: equivalence of qr and bmr}). We use the style classification data set containing $1,264$ observations and $4$ factors (regressors), previously considered in the case study \cite{SCQR}.

To compare estimates obtained with different error functions, we solved the problem \eqref{regression problem} with the SE error and the KB error for the same data set. We considered the case $x=0.005$ and $\alpha^* = F_{Z(c_0^*,\mathbf{c}^*)}(0),$ where $(c_0^*,\mathbf{c}^*)$ is an optimal solution to the regression problem with the SE error. Table \ref{tab:2} compares the results of calculations for these two error functions, where
Row 1 = parameters of the error functions.
Rows 2--7 = optimal values of explanatory variables and intercept. Row 8 =  value of error functions at the optimal point for the SE error. Row 9 =  value of error functions at the optimal point of the KB error.
\begin{table}[H]
    \centering
    \begin{tabular}{|c|c||c||c|}
        \hline
        \# & Objective & SE Error & KB Error \\ \hline
         1& Parameter value& $x=0.005$& $\alpha = 0.803$\\ \hline
         2& Optimal point:& &  \\ \hline
         3&\textsc{rlv} &0.551130 & 0.551001  \\ \hline
         4& \textsc{rlg}& 0.506271 & 0.506568\\ \hline
         5& \textsc{ruj}& -0.072197& -0.072300 \\ \hline
         6& \textsc{ruo}&-0.005550 & -0.005552   \\ \hline
         7&intercept &0.004083 &0.004088   \\ \hline
         8&Errors at optimal point of \eqref{bm regression problem}&0.000789  &0.001774  \\ \hline
         9&Errors at optimal point of \eqref{q regression problem}&0.000789  &0.001774   \\ \hline
    \end{tabular}
    \caption{Biased mean estimation via regressions with SE error and KB error. Row 1 = parameters of the error functions.
Rows 2--7 = optimal values of explanatory variables and intercept. Row 8 =  value of error functions at the optimal point for the SE error. Row 9 =  value of error functions at the optimal point of the KB error. }
    \label{tab:2}
\end{table}

The results in Table \ref{tab:2} numerically validate  Theorem~\ref{Th: equivalence of qr and bmr}. We observe that the coefficients obtained by the two regressions have identical values with at least 3-digit precision. The objective value of the KB error computed at the optimal point of the SE error (row 8, column 4) coincides with the true optimal objective value (row 9, column 4) with 6-digit precision. The same holds for the objective value of the SE error computed at the optimal point for the corresponding KB error (row 8, column 3 and row 9, column 3).

\subsection{ Equivalence between OLS and Superexpectation Regression}
This subsection conducts a controlled experiment, which numerically confirms Theorem~\ref{th: equiv of ols and bmr} (and Corollary~\ref{cor:three regressions}).
% Calculation results are posted at the web link \cite{ECX} (see Problems 5a, 5b, 5c).
As a true law, we take
\begin{equation}\label{simulation}
    Y = X + \varepsilon,
\end{equation}
where $X \sim \mathcal{N}(0,1)$ and $\varepsilon \sim \mathcal{SN}(10) $, where $ \mathcal{SN}(10)$ is the skew normal distribution (see \cite{Azzalini1985}) with shape parameter $a=10$ and location and scale parameters such that the mean and standard deviation are zero and one, respectively.
We estimate the regression coefficients with the SE error ($x=0$), the mean square error (MSE), and the KB error ($\alpha^* = F_\varepsilon(0) = 0.572760$). With given $\varepsilon,$ the assumptions \emph{(a),(b)} of the Theorem \ref{th: equiv of ols and bmr} are satisfied.

We generated $100$ samples of size $n = 100, \ 500, \  1000, \ 5000,  \ 10000, \ 50000, \ 100000, 500000$ with Monte-Carlo method for \eqref{simulation}. For each $n$, we solved $100$ regression problems with three errors and 
computed:
\begin{enumerate}
  \item Minimal Relative $\ell^2$ Error $= \displaystyle{\min_{1 \leq j \leq 100} }\dfrac{\norm{\mathbf{c}^{*,k}_j - \mathbf{c}^*}_2}{\norm{\mathbf{c}^{*,k}_j}_2}, \quad k = 1,2,3$;
    \item Average Relative $\ell^2$ Error $ = \displaystyle{\frac{1}{100}\sum\limits_{j = 1}^{100}}\dfrac{\norm{\mathbf{c}^{*,k}_j - \mathbf{c}^*}_2}{\norm{\mathbf{c}^{*,k}_j}_2}, \quad k = 1,2,3$;  
     \item Maximal Relative $\ell^2$ Error $= \displaystyle{\max_{1 \leq j \leq 100} }\dfrac{\norm{\mathbf{c}^{*,k}_j - \mathbf{c}^*}_2}{\norm{\mathbf{c}^{*,k}_j}_2}, \quad k = 1,2,3$;

    \item  Spread  for Relative $\ell^2$ Error     
    $= \displaystyle{\max_{1 \leq j \leq 100} }\dfrac{\norm{\mathbf{c}^{*,k}_j - \mathbf{c}^*}_2}{\norm{\mathbf{c}^{*,k}_j}_2} - \displaystyle{\min_{1 \leq j \leq 100} }\dfrac{\norm{\mathbf{c}^{*,k}_j - \mathbf{c}^*}_2}{\norm{\mathbf{c}^{*,k}_j}_2}, \quad k = 1,2,3$,
\end{enumerate}
where $\mathbf{c}^{*,1}_j = (c_0^{*,1},c_1^{*,1})^\top_j$ is an optimal solution to the regression \eqref{regression problem} with  the SE error for the $j$th sample, $\mathbf{c}^{*,2}_j = (c_0^{*,2},c_1^{*,2})^\top_j$ is an optimal solution to the regression \eqref{regression problem} with  the mean square error for the $j$th sample, and $\mathbf{c}^{*,3}_j = (c_0^{*,3},c_1^{*,3})^\top_j$ is an optimal solution to the regression \eqref{regression problem} with  the KB error for the $j$th sample.

Table~\ref{tab3}, Table~\ref{tab4}, and Table~\ref{tab:5} show that as the sample size increases, the average relative  $\ell^2$ error monotonically decreases. Solution vectors for three regression problems converge to the true solution $\mathbf{c}^* = (0,1)^\top.$  The numerical convergence rate is similar for all regressions.
\begin{table}[H] % placement parameter H
\centering
\begin{tabular}{|c||c||c||c||c|}
\hline
Sample Size& $\;\;$Minimal$\;\;$ & $\;\;$Average$\;\;$ &$\;$ Maximal$\;$ & $\;\;$Spread $\;\;$ \\
\hline
 100 & 0.016747 & 0.134246 & 0.315741 & 0.298994 \\
 \hline
 500 & 0.006054 & 0.058069 & 0.152820 & 0.146766 \\
 \hline
 1000 & 0.007891 & 0.043216 & 0.130886 & 0.122995 \\
 \hline
 5000 & 0.003236 & 0.017786 & 0.051094 & 0.047857 \\
 \hline
 10000 & 0.000815 & 0.011717 & 0.044784 & 0.043969 \\
 \hline
 50000 & 0.000343 & 0.005583 & 0.012998 & 0.012655 \\
 \hline
 100000 & 0.000406 & 0.003948 & 0.010341 & 0.009935 \\
 \hline
 500000 & 0.000126 & 0.001823 & 0.005316 & 0.005190 \\
\hline
\end{tabular}
\caption{Numerical convergence of regression coefficients with MSE. 
 Relative Minimal, Average, Maximal, Spread for $\ell^2$ error. }
\label{tab3}
\end{table}

\begin{table}[H]
\centering
\begin{tabular}{|c||c||c||c||c|}
\hline
Sample Size& $\;\;$Minimal$\;\;$ & $\;\;$Average$\;\;$ &$\;$ Maximal$\;$ & $\;\;$Spread $\;\;$ \\
\hline
 100 & 0.021023 & 0.161826 & 0.387331 & 0.366308 \\
 \hline
 500 & 0.003810 & 0.072019 & 0.185431 & 0.181621 \\
 \hline
 1000 & 0.004552 & 0.049969 & 0.181418 & 0.176866 \\
 \hline
 5000 & 0.003213 & 0.021631 & 0.064678 & 0.061464 \\
 \hline
 10000 & 0.000540 & 0.015914 & 0.053236 & 0.052696 \\
 \hline
 50000 & 0.000649 & 0.007127 & 0.020595 & 0.019946 \\
 \hline
 100000 & 0.000752 & 0.004364 & 0.009795 & 0.009043 \\
 \hline
 500000 & 0.000121 & 0.002197 & 0.006262 & 0.006141 \\
\hline
\end{tabular}
\caption{Numerical convergence of regression coefficients with SE error. Relative Minimal, Average, Maximal, Spread for $\ell^2$ error. }
\label{tab4}
\end{table}

\begin{table}[H]
\centering
\begin{tabular}{|c||c||c||c||c|}
\hline
Sample Size& $\;\;$Minimal$\;\;$ & $\;\;$Average$\;\;$ &$\;$ Maximal$\;$ & $\;\;$Spread $\;\;$ \\
\hline
 100 & 0.029656 & 0.187235 & 0.431448 & 0.401792 \\
 \hline
 500 & 0.003415 & 0.082709 & 0.203619 & 0.200204 \\
 \hline
 1000 & 0.001789 & 0.059149 & 0.182991 & 0.181201 \\
 \hline
 5000 & 0.002448 & 0.024226 & 0.064841 & 0.062393 \\
 \hline
 10000 & 0.003892 & 0.018588 & 0.057114 & 0.053222 \\
 \hline
 50000 & 0.000684 & 0.007947 & 0.019892 & 0.019208 \\
 \hline
 100000 & 0.000131 & 0.005065 & 0.011583 & 0.011452 \\
 \hline
 500000 & 0.000263 & 0.002580 & 0.005907 & 0.005643 \\
\hline
\end{tabular}
\caption{Numerical convergence of regression coefficients with KB error. Relative Minimal, Average, Maximal, Spread for $\ell^2$ error.}
 \label{tab:5}
\end{table}

\subsection{Sparse Regression: Mean Square Error vs.\ Superexpectation Error}\label{subsec:sparse_numerical}
This subsection numerically compares the performance of the sparse regression with MSE \eqref{OLS sparse} and SE error \eqref{SE regr SAA sparse}. We are considering a simple setting where the true model is 
\begin{equation}\label{sparse simulation}
    Y = \mathbf{c}^{*\top}\mathbf{X} + \varepsilon \, ,
\end{equation}
where $\mathbf{X} \sim \mathcal{N}(\mathbf{0},\Sigma)$ with $\Sigma = (\rho^{|i-j|})_{i,j}, \ \rho = 0.9$, $\varepsilon \sim \mathcal{N}(0,1)$, and $\mathbf{c}^* \in \{-1,0,1\}, \ \mathbf{c}^* \in \bbr^{d}, \ d=3000$ with exactly $k^*=10$ nonzero coefficients. 

We generated $10$ samples of size $n = 300, \ 500, \  1000, \ 5000$ with Monte-Carlo method for \eqref{sparse simulation}. For each $n$, we solved $10$ sparse regression problems with two errors and 
computed:
\begin{enumerate}
    \item Minimal Accuracy $= \min\limits_{1 \leq j \leq 10} A(\mathbf{c}^m_j), \quad A(\mathbf{c}^m_j):=\dfrac{\operatorname{card}(\{i: c_{i,j}^m \neq 0, c_i^* \neq 0\ \})}{k^*}, \quad  j = 1,\ldots,n;$
    \item Maximal Accuracy $=\max\limits_{1 \leq j \leq 10} A(\mathbf{c}^m_j), \quad m=1,2$
    \item Average Accuracy $= \dfrac{1}{10}\displaystyle{\sum\limits_{j=1}^{10}}  \, A(\mathbf{c}^m_j), \quad m=1,2,$
\end{enumerate}
where $\mathbf{c}_j^m = (c_{1,j}^m, \ldots, c_{d,j}^m)$ is an optimal (or at least feasible) solution to the sparse regression problem with MSE ($m=1$), and with SE error ($m=2$) obtained in $300$ seconds for the $j$'s sample.

Tables~\ref{tab6} and~\ref{tab7} present the numerical performance of the two regressions in terms of accuracy, solution time, and MIP gap. The results indicate that the sparse regression with SE error substantially outperforms the MSE-based formulation when the number of observations $n$ is much smaller than the number of regressors ($d=3000$). By contrast, as $n$ increases, Gurobi solves the MSE formulation to proven optimality much faster; for $n=5000$, optimality is certified in approximately $200$ seconds. From a practical standpoint, regimes with $n \ll d$ are often more relevant, and in that regime the SE error is particularly effective.
  
\begin{table}[H] % placement parameter H
\centering
\begin{tabular}{|c||c||c||c||c||c|}
\hline
Sample Size& $\;\;$Min Acc.$\;\;$ & $\;\;$Avg Acc.$\;\;$ &$\;$ Max Acc.$\;$ &$\;$ Avg Time (sec)$\;$&$\;$ Avg MIP Gap$\;$\\
\hline
 300 & 0.9 & 0.99 & 1.0 & 300& 10.25 \%\\
 \hline
 500 & 1.0 & 1.0 & 1.0 & 300&9.68 \%\\
 \hline
 1000 & 1.0 & 1.0 & 1.0 &300 &7.02 \%\\
 \hline
 5000 & 0.0 & 0.0 & 0.0 & 300&69.84\%\\
 \hline
\end{tabular}
\caption{Numerical results for sparse regression with SE error. 
 Minimal, Average, Maximal Accuracy, average solving time, and MIP gap. }
\label{tab6}
\end{table}
\begin{table}[H] % placement parameter H
\centering
\begin{tabular}{|c||c||c||c||c||c|}
\hline
Sample Size& $\;\;$Min Acc.$\;\;$ & $\;\;$Avg Acc.$\;\;$ &$\;$ Max Acc.$\;$ &$\;$ Avg Time (sec)$\;$&$\;$ Avg MIP Gap$\;$\\
\hline
 300 & 0.0 & 0.0 & 0.0 & 300& --\\
 \hline
 500 & 0.0 & 0.1 & 1.0 & 300&--\\
 \hline
 1000 & 1.0 & 1.0 & 1.0 & 282 & 88 \%\\
 \hline
 5000 & 1.0 & 1.0 & 1.0 &204 &0 \%\\
 \hline
\end{tabular}
\caption{Numerical results for sparse regression with MSE. 
 Minimal, Average, Maximal Accuracy, average solving time, and MIP gap.}
\label{tab7}
\end{table}

\section{Conclusions}\label{sec:Conclusion}
This paper introduced the \emph{biased mean quadrangle} (BMQ) and developed an associated family of estimation and decision tools centered on a biased mean statistic. The BMQ’s risk---the \emph{superexpectation risk}---extends mean-upper-semirisk, while its error---the \emph{superexpectation error}---balances averages of the positive and negative parts of a random variable $Y$ in a tunable, directionally asymmetric manner. These choices yield models that directly target practically meaningful events (exceeding or falling short of the mean by a margin $x$ in the units of $Y$) and integrate with regret and deviation via the RQ identities (Definition~\ref{risk quadrangle}).

On the theoretical side, we established: (i) \emph{subregularity} of the BMQ (Definition~\ref{def:subregquadrangle}, Proposition~\ref{Regularity of BMQ}), thereby providing, to our knowledge, the first subregular quadrangle with concrete applied relevance; (ii) \emph{equivalence of deviation minimization} between CVaR and superexpectation deviations (Proposition~\ref{prop:equivalence}), with implications for portfolio optimization (Example~\ref{portfolio opt example}); and (iii) a \emph{reparameterization of quantile regression} in terms of biased means, proving that superexpectation-based biased-mean regression and quantile regression yield the same solutions for an appropriate mapping between the margin $x$ and the quantile level $\alpha$ (Theorem~\ref{Th: equivalence of qr and bmr}). As a special case, setting $x=0$ recovers a coherent, piecewise-linear mean quadrangle for which the induced \emph{superexpectation regression} is equivalent to OLS and admits an LP formulation (Corollary~\ref{cor:reg+coherence}, Theorem~\ref{th: equiv of ols and bmr}), offering modeling transparency and computational advantages.

From a practical standpoint, the unit-calibrated parameterization by $x$ enhances interpretability (``factors of beating the average exactly by $x$'') and alignment with domain policies. Our numerical experiments further indicate that, in small-sample high-dimensional regimes ($n \ll d$), the SE-based sparse regression is particularly effective, while for larger $n$ the MSE-based formulation is solved to proven optimality more rapidly by off-the-shelf solvers. Altogether, the BMQ provides a principled bridge between statistical estimation and risk-sensitive optimization.

\newpage
\bibliography{references}
\bibliographystyle{plainnat}
  \renewcommand{\bibsection}{\subsubsection*{References}}
\end{document}